\newtheorem{theorem}{Theorem}
\newtheorem{lemma}[theorem]{Lemma}
\numberwithin{equation}{section}
\theoremstyle{plain}
\newcommand{\ignore}[1]{}
\newcommand{\blind}{0}
\newcommand{\E}{\mathrm{E}}
\newcommand{\N}{\mathrm{N}}
\renewcommand{\P}{\mathrm{P}}
\newcommand{\RR}{\mathbb{R}}
\newcommand{\HH}{\mathbb{H}}
\newcommand{\KK}{\mathbb{K}}
\newcommand{\LL}{\mathbb{L}}
\newcommand{\trans}{^{\tiny{\mathrm{T}}}}
\begin{document}

\def\spacingset#1{\renewcommand{\baselinestretch}%
{#1}\small\normalsize} \spacingset{1}

%
%
%
%
%
\if0\blind
{
  \title{\bf {Optimal Bayesian Smoothing of Functional Observations over a Large Graph}}
  \author{$^1$Arkaprava Roy, 
       $^2$Shubhashis Ghosal\\
       $^1$University of Florida, $^2$North Carolina State University}
  \maketitle
} \fi

\if1\blind
{
  \bigskip
  \bigskip
  \bigskip
  \begin{center}
    {\LARGE\bf }
\end{center}
  \medskip
} \fi

\bigskip
	
	\begin{abstract}
		In modern contexts, some types of data are observed in high-resolution, essentially continuously in time. Such data units are best described as taking values in a space of functions. Subject units carrying the observations may have intrinsic relations among themselves, and are best described by the nodes of a large graph. It is often sensible to think that the underlying signals in these functional observations vary smoothly over the graph, in that neighboring nodes have similar underlying signals. This qualitative information allows borrowing of strength over neighboring nodes and consequently leads to more accurate inference. In this paper, we consider a model with Gaussian functional observations and adopt a Bayesian approach to smoothing over the nodes of the graph. We characterize the minimax rate of estimation in terms of the regularity of the signals and their variation across nodes quantified in terms of the graph Laplacian. We show that an appropriate prior constructed from the graph Laplacian can attain the minimax bound, while using a mixture prior, the minimax rate up to a logarithmic factor can be attained simultaneously for all possible values of functional and graphical smoothness.  We also show that in the fixed smoothness setting, an optimal sized credible region has arbitrarily high frequentist coverage. A simulation experiment demonstrates that the method performs better than potential competing methods like the random forest. The method is also applied to a dataset on daily temperatures measured at several weather stations in the US state of North Carolina. 
		\end{abstract}
		
		\noindent%
{\it Keywords:} Functional data, graph Laplacian, graphical smoothness, posterior contraction, minimax rate, adaptation, Gaussian process.
	\spacingset{1.45}
	
\section{Introduction}
\label{sec:introduction}

Functional observations at various locations are commonly encountered in spatial statistics, and are often called spatio-temporal data. Extracting the underlying signal from such noisy spatio-temporal data involves smoothing over both space and time. In the Bayesian context, Gaussian processes are often used to describe spatio-temporal measurements (\citet{banerjee2014hierarchical}). Functional data can also be associated with subjects, for instance, in medical or financial applications. In the modern era, the internet traffic data, or data obtained from wearable devices, are functional in nature, and are typically collected from many sources simultaneously. Unlike in the spatial context, a clear notion of the closeness of physical locations over which a smoothing can be carried out may be lacking, but some other notion of connectivity may be meaningful. Such a complex relational network may be described by a graph, with subjects standing for the nodes of the graph, and an edge connecting two nodes stands for a relation. Typically, the underlying graph has a large number of nodes. In most contexts, the graph is given or is easily identified. For instance, in spatial statistics, closeness of physical locations may clearly define neighboring nodes and constitute a graph, while in other contexts such as a protein interaction network (\citet{sharan2007network}) or an image interpolation problem (\citet{liu2013image}), the graph structure comes from the nature of the problem. In some other contexts such as voting patterns by politicians on issues (\citet{poole1991patterns}), the graph may have to be estimated from the data using models for interaction such as the Ising model (\citet{cipra1987introduction}). 

When the observations at the nodes of the graph are scalar measurements, an asymptotic framework for assessing the estimation of their parameters was proposed by \citet{kirichenko2018minimax} in terms of the so-called graph-Laplacian. The growth of the eigenvalues of the graph-Laplacian leads to a notion of a dimension of the graph. \citet{kirichenko2018minimax} introduced smoothness classes determined by a weighted Euclidean distance on the parameter vector using an appropriate power of the graph-Laplacian, depending on the dimension of the graph. Then they obtained the minimax rates of estimation in these smoothness classes under infill asymptotics. \citet{kirichenko2017estimating} developed a Bayesian procedure using a multivariate normal prior with an appropriate dispersion matrix, and showed that the resulting Bayesian procedure achieves the minimax rate of estimation. Further, as the prior does not use the knowledge of smoothness, the obtained rate automatically adapts to the smoothness. 

In this paper, we consider functional observations over a domain $\mathbb{T}$ at the nodes of a graph. The goal is to optimally recover their underlying mean functions. The functional observations are assumed to be distributed according to Gaussian processes on $\LL_2(\mathbb{T})$, the space of square-integrable functions on $\mathbb{T}$. Multidimensional observations at nodes are treated by considering a finite set $\mathbb{T}$, while a continuous domain gives functional observations. We characterize the minimax rate for inference on the vector of means taking values in a certain type of smoothness classes in $(\LL_2(\mathbb{T}))^n$. We then construct a Bayesian procedure using a joint Gaussian process prior with covariance kernel determined by an appropriate power of the graph-Laplacian when the smoothness level is given. We show that the posterior contraction rate equals the minimax rate, thus showing the asymptotic optimality of the proposed Bayesian procedure in the frequentist sense.  We note that as the function becomes infinitely smooth, we recover the rate obtained by \citet{kirichenko2017estimating} for scalar observations.  As the graphical smoothness increases indefinitely, the classical setting with independent and identically distributed (i.i.d.) replications of functional observations is approached and standard rates for one-dimensional function smoothing are recovered.  By using a random series prior with an unspecified number of terms, we show that the Bayesian procedure can adapt to the unknown smoothness, in that a single prior achieves the obtained rate simultaneously for all levels of smoothness, within a logarithmic factor. Then, we show that in a fixed smoothness setting, with an optimal choice of the prior, a posterior credible ball inflated by an appropriate constant, has frequentist coverage converging to one. Therefore, the resulting credible ball, which is easy to obtain by posterior simulation, can serve as a confidence ball in the frequentist sense, thus justifying Bayesian uncertainty quantification in the frequentist sense. The result is new even in the context of scalar observations. Finally, a posterior contraction result for discrete domain quantifying the accuracy of estimating trend of a multivariate time series is presented.  

The following notations will be used throughout the paper. The symbols `$\lesssim$' `$\gtrsim$' will stand for inequality up to an unimportant constant multiple, and $\asymp$ for the equality of the order of magnitude. For a vector $a$ (respectively matrix $\mathrm{A}$), $a\trans$ (respectively, $\mathrm{A}\trans$) will denote the transpose. Let $\mathrm{I}$ stand for the $n\times n$ identity matrix. The indicator function is denoted by $\mathbbm{1}$ and the number of elements of a finite set by $\#$. 

The paper is organized as follows. In the next section, we describe the model, present the preliminaries on graph-Laplacian and multidimensional Gaussian processes, and introduce the prior. In Section~\ref{sec:main}, the minimax rate is characterized, posterior contraction rates are obtained in both known and unknown smoothness settings, and asymptotic coverage of an appropriate credible ball is obtained.  A simulation study comparing the performance of the proposed procedure with the random forest and a parallelly implemented univariate time series imputation algorithm is presented in Section~\ref{sec:simulation}. The proposed method is also illustrated with a daily temperature data at weather stations in North Carolina. Proofs are given in the appendix.

\section{Model, prior and preliminaries}
\label{sec:model}

Let $G=(V,E)$ stand for a connected undirected graph with the set of nodes $V$, where $E\subset V\times V$ is the set of edges not containing any diagonal element. Suppose that with each node $i\in V$, there is an associated characteristic $f_i$, which is assumed to ``change gradually over neighboring nodes''. The notion of a gradual change can be made precise when the cardinality $n$ of $V$ is large, in a manner similar to in-fill asymptotics used on a lattice. However, unlike the latter case where grid-points are regularly placed and the geometry is homogeneous, a notion of smoothness on a graph should take the structure of the graph in consideration. This can be captured by the adjacency matrix $\mathrm{A}=(\!(a_{ij})\!)$ with $a_{ij}=\mathbbm{1}\{(i,j)\in E\}$. Let $\mathrm{D}=(\!(d_{ij})\!)$ stand for the diagonal matrix with $d_{ii}=\#\{j\in V: (i,j)\in E\}$, the degree of the $i$th node. Then the graph-Laplacian is defined by $\mathrm{ L}=\mathrm {D}-\mathrm {A}$. Clearly, $\mathrm{ L}$ is symmetric, and can be shown to be positive semi-definite with the minimum eigenvalue always $0$ (corresponding to the eigenvector $(1,\ldots,1)$) and all other eigenvalues are positive. Let $0=\lambda_0<\lambda_1\le \cdots \le \lambda_{n-1}$ stand for the eigenvalues of $\mathrm{ L}$. A common situation is that 
\begin{equation}
\label{Laplacian eigenvalues}
C_1 ((i-1)/n)^{2/r}\le \lambda_i \le C_2 ((i-1)/n)^{2/r}, \quad i=i_0,\ldots, \kappa n,
\end{equation}
for some constants $C_1,C_2$, positive integer $i_0$, $\kappa \in (0,1]$, and $r\ge 1$, known as the dimension of the graph. For instance, if the graph is a $d$-dimensional lattice, $r=d$, so $r$ generalizes the notion of dimension to a general graph. By a result of \citet{mohar1991laplacian}, $\lambda_1\ge 4/n^2$, so $r<1$ is not possible. The value of $r$ may be obtained numerically by regressing $\log \lambda_i$ against $\log i$, such as in the small world-graph (\citet{watts1998collective}) and a protein-interaction graph (\citet{kolaczyk2014statistical}), or for the dataset used in Section~\ref{sec:simulation}. As argued in \citet{kirichenko2017estimating}, we may assume that $\kappa=1$ in \eqref{Laplacian eigenvalues} at the expense of a larger value for the constant $C_2$.

When the characteristics associated with the nodes are real-valued, the object of interest is an $n$-dimensional column vector $f=(f_1,\ldots,f_n)\trans$, and the accuracy of estimation is measured by the normalized Euclidean norm $(n^{-1} \sum_{i=1}^n f_i^2)^{1/2}$, \citet{kirichenko2017estimating,kirichenko2018minimax} considered smoothness classes based on the graph-Laplacian as follows: a Sobolev ball of regularity $\beta$ and radius $Q$ is defined to be $H^\beta(Q)=\{f: f\trans (\mathrm{I}+(n^{2/r}\mathrm{ L})^\beta) f \le nQ^2 \}$. The primary motivation behind the choice is that if the graph is a one-dimensional lattice (so $r=1$) and $f_i=f(i/n)$ for some smooth function $f$ on $[0,1]$, then the regularity $\beta=1$ in the above sense corresponds to the Sobolev regularity of order $1$ for the function $f$. Under this setting, \citet{kirichenko2018minimax} showed that, based on independent observations $Y_i\sim \N (f_i, \sigma^2)$ where $\sigma$ is fixed or lies in a compact subinterval of $(0,\infty)$, the minimax rate of estimation in $H^\beta(Q)$ is $n^{-\beta/(2\beta+r)}$. This can be achieved by a multivariate normal prior on $f$: 
\begin{equation}
\label{graph-Laplacian prior}
f|c\sim \N_n(0, (c/n)^{(2\alpha+r)/r} (\mathrm{L}+ n^{-2} \mathrm{I})^{-(\alpha+r/2)}), \qquad c\sim \mathrm{Exp}(1),
\end{equation}
provided that $\beta\le \alpha+r/2$. The matrix powers are well-defined through the spectral theorem in view of the nonnegative definiteness of $\mathrm{L}$. Because the prior does not depend on the underlying regularity index $\beta$, the Bayesian method is automatically rate-adaptive, on the range $\beta \in (0, \alpha+r/2]$. They also showed that the full-range adaptation up to a logarithmic factor is possible using a different prior based on the exponential of the Laplacian. 

To generalize the results of \citet{kirichenko2017estimating,kirichenko2018minimax} to functional observations, we model $Y_1,\ldots,Y_n$ as independent Gaussian processes on $\LL_2(\mathbb{T})$ for a compact domain $\mathbb{T}$, with mean functions $f_1,\ldots,f_n$ respectively, and a common known covariance kernel $\sigma_n^2 \Sigma$. The kernel $\Sigma$ acts as a compact operator on $\LL_2(\mathbb{T})$, and hence has eigenvalues $\kappa_j\downarrow 0$. Let $\psi_j$ stand for the normalized eigenfunction of $\Sigma$ corresponding to the eigenvalue $\kappa_j$. Since Bayesian inference needs a likelihood function, the family of measures of $Y=(Y_1,\ldots,Y_n)\trans$ as $(f_1,\ldots,f_n)\trans$ varies over possible values must be absolutely continuous with respect to each other. This can hold only if $f_1,\ldots,f_n$ belong to the reproducing kernel Hilbert space (RKHS) $\HH$ of the covariance kernel $\Sigma$; see Appendix I of \citet{Ghosal}. The RKHS $\HH$ is  
a subspace of $\LL_2(\mathbb{T})$ consisting of functions $g=\sum_{j=1}^\infty \kappa_j^{1/2} \theta_j \psi_j$, where $\sum_{j=1}^\infty \theta_j^2<\infty$. The RKHS norm $\| \cdot \|_{\HH}$ on $\HH$ given by $\|g\|_{\HH}^2=\sum_{j=1}^\infty  \theta_j^2$ makes $\HH$ a Hilbert space. 

The space where $f=(f_1,\ldots,f_n)\trans$ takes values is $\HH^n$ equipped with the normalized aggregated norm $\|f\|_{n} =  (n^{-1} \sum_{i=1}^n \|f_i\|_{\HH}^2)^{1/2}$. We denote the corresponding inner product by $\langle \cdot, \cdot \rangle_{n}$. 
Write $f_i =\sum_{j=1}^\infty \kappa_j^{1/2} \theta_{ij} \psi_j$, $i=1,2,\ldots$. 
Also let  $\theta_j=(\theta_{1j}, \ldots,\theta_{nj})\trans\in \RR^n$ stand for the vector of the $j$th coefficients in the expansion of $f_1,\ldots,f_n$ in terms of the basis $\mathcal{B}=(\psi_1,\psi_2,\ldots)$. Note that $f$ has an eigen-representation $f=\sum_{i=0}^{n-1} \sum_{j=1}^\infty \kappa_j^{1/2} \vartheta_{ij} \psi_j e_i$, where $e_{0},\ldots,e_{n-1}$ are the eigenvectors of $\mathrm{L}$ corresponding to the eigenvalues $\lambda_0,\ldots,\lambda_{n-1}$. 
A equivalent canonical model is given by $Z_{ij}\sim \mathrm{N}( \vartheta_{ij}, 1)$ independently, where $Z_{ij}=\kappa_j^{-1/2} \int \langle Y, e_i \rangle \psi_j$ and $\vartheta_{ij}=\kappa_j^{-1/2} \int \langle f, e_i\rangle \psi_j$. Note that if $i\ne i'$, $Z_{ij}$ and $Z_{i'j'}$  are clearly independent for any $j,j'$, while $\mathrm{Cov} (Z_{ij},Z_{ij'})= \kappa_j^{-1/2} \kappa_{j'}^{-1/2} \int\!\int \Sigma(s,t) \psi_j(s) \psi_{j'}(t)ds\,dt= (\kappa_j/\kappa_{j'})^{1/2}\int \psi_j(t) \psi_{j'}(t)dt=   \mathbbm{1}(j=j')$, so $Z_{ij}$ and $Z_{ij'}$ are also  independent for $j\ne j'$.

The smoothness of a vector of functions on the graph can be described in terms of these coefficients. 
For $\beta,\gamma>0$, to quantify the regularity with respect to the graphical structure and the temporal direction respectively, define the $(\beta,\gamma)$-Sobolev ball $\mathcal{H}^{\beta,\gamma}(Q)$ of radius $Q$ in $\HH^n$ by 
$$
\big\{f: \sum_{j=1}^\infty  { j^{2\gamma}}  \langle \theta_j, (\mathrm{I}+(n^{2/r}\mathrm{L})^\beta) \theta_j\rangle_{n} \le Q^2 \big\}= \big\{f: n^{-1}\sum_{i=1}^{n}\sum_{j=1}^\infty { j^{2\gamma}}   (1+n^{2\beta/r}\lambda_i^\beta) \vartheta_{ij}^2 \le Q^2 \big\}. 
$$
To understand the notion, consider the white noise model $dY_i(t)=f_i(t)dt+dB_i(t)$ in the equivalent form $Y_i(t)=\int_0^t f_i(s)ds+B_i(t)$, where $B_i$ are independent standard Brownian motions. Then $\kappa_{2j-1}=\kappa_{2j}=j^{-1}$ and the smoothness $\gamma$ in the above sense coincides with the common Sobolev smoothness of each component function $f_i$. 

We note that an $n\times n$-matrix $\mathrm{ A}=(\!(a_{il})\!)$ can also be identified as a linear operator on $\HH^n$ through the relation $\mathrm{A} (h_1,\ldots,h_n)\trans= (\sum_{l=1}^n a_{1l} h_l,\ldots,\sum_{l=1}^n a_{nl} h_l) \trans$ for any $(h_1,\ldots,h_n)\trans\in \HH^n$. For an $n\times n$-matrix $\mathrm{ A}$ and a linear operator $B$ on $\HH$, define the Kronecker product $\mathrm{ A}\otimes B$ to be a linear operator on $\HH^n$ such that  $(\mathrm{ A}\otimes B)(h_1,\ldots,h_n)\trans= \mathrm{ A} (Bh_1,\ldots,Bh_n)\trans$. If $\mathrm{ A}$ is a (symmetric) nonnegative definite matrix and $B$ is a Hermitian nonnegative definite operator on $\HH$, then it is easy to verify that $\mathrm{ A}\otimes B$ is a Hermitian nonnegative definite operator on $\HH^n$. If $\psi,\phi\in \HH$, the tensor product $\psi\otimes \phi$ is a  linear operator on $\HH$ defined by $(\psi\otimes \phi)(h)=\langle \phi, h\rangle_{\HH} \psi$.   

We put a multivariate Gaussian process prior on $f\in \HH^n$ with a separable covariance operator given by a tensor product $S(\mathrm {L}) \otimes \Omega$, where $S(\mathrm {L})$ is a positive definite matrix depending on the graph Laplacian $\mathrm {L}$, and $\Omega$ is a covariance kernel on $\HH$ with appropriate regularity. In the next section, we describe appropriate choices for $S(\mathrm{ L})$ and $\Omega$.

\section{Main results}
\label{sec:main}

We first obtain the minimax rate over Sobolev balls $\mathcal{H}^{\beta,\gamma}(Q)$ for any $Q>0$. 
The minimax risk for the problem is defined by $R_n=\inf_{T_n}\sup\{ \E_f\|T_n-f\|_{2,n}^2: f\in \mathcal{H}^{\beta, \gamma}(Q)\}$, where the infimum is taken over all possible estimators. The  decay rate of the square root of the minimax risk with $n$ is called the minimax rate. 

To simplify certain bounds, we also make a simplifying assumption that $i_0=1$, and hence the smoothness condition given by the Sobolev ball can be simplified to $\sum_{i=1}^{n} \sum_{j=1}^\infty i^{2\beta/r} j^{2\gamma}\vartheta_{ij}^2 \le nQ^2$ after adjusting the constant $Q$.	We assume throughout  that $\sigma_n=1$, because the general case may be obtained by scaling: if $\epsilon_n$ is the rate obtained under this standard scaling, then the rate in the general case will be $\sigma_n \epsilon_n$.

\begin{theorem}[Minimax rate]
	\label{minimax}
	For any $Q>0$, the minimax rate for estimation of $f$ with respect to the norm $\|\cdot\|_{n}$ on  $\mathcal{H}^{\beta,\gamma}(Q)$ is $n^{-\beta \gamma /(2\beta\gamma+\beta+ r\gamma)}$. 
\end{theorem}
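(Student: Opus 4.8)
The plan is to reduce the functional-over-graph estimation problem to an independent Gaussian sequence model indexed by the pair $(i,j)$, where $i$ ranges over graph eigendirections and $j$ over the temporal basis, and then compute the minimax rate of that sequence model directly. This is legitimate because the excerpt already established the canonical form $Z_{ij}\sim \N(\vartheta_{ij},1)$ independently, with the loss $\|T_n-f\|_n^2 = n^{-1}\sum_{i}\sum_j (\hat\vartheta_{ij}-\vartheta_{ij})^2$ and the parameter space (under $i_0=1$, $\sigma_n=1$) given by the ellipsoid $\sum_{i=1}^n\sum_{j=1}^\infty i^{2\beta/r} j^{2\gamma}\vartheta_{ij}^2 \le nQ^2$. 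Thus the whole problem is a minimax estimation problem over a Sobolev-type ellipsoid in the coordinates $\vartheta_{ij}$, with semi-axis weights $a_{ij} = i^{2\beta/r} j^{2\gamma}$ and total noise level scaled by $n^{-1}$.

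\medskip

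\noindent\textbf{Upper bound.} I would use a linear (Pinsker-type) estimator or, more simply, a projection/truncation estimator that keeps only coordinates with $a_{ij}$ below a threshold $M$ and discards the rest. The squared risk splits into a variance term, $n^{-1}\cdot(\text{number of retained coordinates})$, and a bias term controlled by the ellipsoid constraint on the discarded coordinates, which is at most $n^{-1}\cdot nQ^2 / M = Q^2/M$ since each discarded coordinate has weight $\ge M$. The number of retained coordinates is $\#\{(i,j): i^{2\beta/r}j^{2\gamma}\le M, \ 1\le i\le n\}$; estimating this count is the combinatorial heart of the calculation. Summing over $j$ for fixed $i$ gives on the order of $(M/i^{2\beta/r})^{1/(2\gamma)}$ retained temporal modes (when $i^{2\beta/r}\le M$), and summing this over $i$ from $1$ to $\min(n, M^{r/(2\beta)})$ yields a count whose order I would determine by comparing the exponents $r/(2\beta)$ and $1/(2\gamma)$ against $1$. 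Balancing the variance term $n^{-1}N(M)$ against the bias term $Q^2/M$ and optimizing over $M$ should produce the stated exponent $\beta\gamma/(2\beta\gamma+\beta+r\gamma)$.

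\medskip

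\noindent\textbf{Lower bound.} For the matching lower bound I would apply a standard tool for Gaussian sequence models over ellipsoids — either Pinsker's theorem (which gives the sharp constant and hence the exact rate for the linear minimax risk, equal up to constants to the full minimax risk over an ellipsoid), or a hypercube/Assouad or Fano argument on a suitable subset of coordinates. Concretely, I would choose the same threshold $M^\ast$ from the upper-bound optimization, restrict to the retained coordinate block, and place a product prior (or a packing) of magnitude matched to the noise level $n^{-1/2}$ on each of the $N(M^\ast)$ coordinates while respecting the ellipsoid budget; the resulting Bayes risk lower bound reproduces the rate. Verifying that the least favorable configuration stays inside $\mathcal{H}^{\beta,\gamma}(Q)$ is the main bookkeeping obstacle here.

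\medskip

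\noindent\textbf{Main obstacle.} The genuinely delicate step is the counting estimate $N(M)=\#\{(i,j):i^{2\beta/r}j^{2\gamma}\le M,\ i\le n\}$ together with the companion tail sum of weights over discarded coordinates, because the answer depends on whether the truncation in $i$ is cut off by the graph size $n$ or by the smoothness threshold. The exponent $\beta\gamma/(2\beta\gamma+\beta+r\gamma)$ has the structure of a \emph{harmonic combination} of the two rates $n^{-\gamma/(2\gamma+1)}$ (pure temporal smoothing, recovered as $\beta\to\infty$) and $n^{-\beta/(2\beta+r)}$ (pure graphical smoothing, recovered as $\gamma\to\infty$), so I would double-check the two-dimensional integral approximation of the lattice count carefully and confirm both limiting regressions as a sanity check before finalizing the constant $M^\ast$.
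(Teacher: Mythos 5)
Your overall architecture is the same as the paper's: reduce to the canonical double-index sequence model $Z_{ij}\sim \N(\vartheta_{ij},1)$ with ellipsoid $\sum_{i,j} i^{2\beta/r}j^{2\gamma}\vartheta_{ij}^2\le nQ^2$, prove the upper bound with a linear (Pinsker/truncation) estimator, and prove the lower bound with a Gaussian least-favorable prior on the retained block (the paper uses $l^*_{ij}=(1-\delta b_{ij})_+$ with $b_{ij}=i^{\beta/r}j^{\gamma}$ and a prior on $\mathcal{D}=\{(i,j):b_{ij}\le 1/\delta\}$). The genuine gap is that you defer the one computation that decides the answer --- the lattice count --- and assert without verification that balancing ``should produce'' the stated exponent. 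It does not. For the \emph{product} weights $a_{ij}=i^{2\beta/r}j^{2\gamma}$ that define $\mathcal{H}^{\beta,\gamma}(Q)$, the region $\{a_{ij}\le M\}$ is a hyperbolic cross, not a rectangle: $N(M)=M^{1/(2\gamma)}\sum_{i\le M^{r/(2\beta)}} i^{-\beta/(r\gamma)} \asymp M^{\max\{r/(2\beta),\,1/(2\gamma)\}}$, up to a $\log M$ factor when $\beta=r\gamma$ (the relevant comparison is $r/(2\beta)$ versus $1/(2\gamma)$, not each exponent versus $1$ as you wrote, because the series $\sum_i i^{-\beta/(r\gamma)}$ either converges or is dominated by its last term). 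Balancing $n^{-1}N(M)\asymp Q^2/M$ then yields the squared risk $\max\{n^{-2\beta/(2\beta+r)},\,n^{-2\gamma/(2\gamma+1)}\}$, the slower of the two one-dimensional rates, which is strictly \emph{faster} than the claimed $n^{-2\beta\gamma/(2\beta\gamma+\beta+r\gamma)}$ for all finite $\beta,\gamma$. The claimed harmonic-combination exponent is what emerges from the rectangle count $N(M)\asymp M^{r/(2\beta)+1/(2\gamma)}$, i.e., from \emph{additive} weights $i^{2\beta/r}+j^{2\gamma}$ (an anisotropic class, equivalent to intersecting the two marginal ellipsoids), not from the product weights actually defining the class.

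You also cannot close this gap by importing the paper's count, because the paper's proof passes over exactly this point: it bounds $\#\mathcal{D}\le \sum_{i\le \delta^{-r/\beta}}\delta^{-1/\gamma} i^{-\beta/(r\gamma)}$ by the rectangle value $\delta^{-r/\beta-1/\gamma}$ and then treats that upper bound as sharp, both when solving Pinsker's equation for $\delta$ (``giving $\delta \asymp n^{-\beta\gamma/(2\beta\gamma+\beta+r\gamma)}$'') and when asserting that the Bayes risk of its least-favorable prior satisfies $M^*\asymp n^{-1}\delta^{-1/\gamma-r/\beta}\asymp\delta^2$. With the correct count $\#\mathcal{D}\asymp\max\{\delta^{-r/\beta},\delta^{-1/\gamma}\}$, that prior's Bayes risk is $\asymp n^{-1}\#\mathcal{D}$, which at the paper's choice of $\delta$ is of strictly smaller order than $\delta^2$; the upper-bound half of the argument remains valid but not tight (indeed your own truncation estimator already beats the stated rate over the product-weight class), so no lower-bound construction can deliver the stated rate for this class. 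Finally, note that the sanity check you propose --- recovering $n^{-\beta/(2\beta+r)}$ as $\gamma\to\infty$ and $n^{-\gamma/(2\gamma+1)}$ as $\beta\to\infty$ --- cannot detect any of this: both $\max\{n^{-\beta/(2\beta+r)},n^{-\gamma/(2\gamma+1)}\}$ and the harmonic-combination rate have the same two limits, and they differ precisely in the intermediate regime. To complete a correct proof you must either carry out the hyperbolic-cross count honestly and state the resulting rate, or work with the additive-weight (anisotropic) smoothness class, for which your Pinsker/truncation plan does produce the exponent $\beta\gamma/(2\beta\gamma+\beta+r\gamma)$.
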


To derive the minimax rate, we use the canonical form of the problem. 
Our proof will use the techniques of \citet{Tmin} based on Pinsker's theorem. This requires constructing the ``Pinsker estimator'' and studying its risk, which has been addressed so far only for a single sequence. Extension to the double-index setting is a major technical advancement. 

It is natural to try and construct a Bayesian procedure attaining the minimax optimal rate. 
Recall that for a statistical model $\{X^{(n)}\sim P_\theta: \theta\in \Theta\}$ with prior $\theta\sim \Pi$, the posterior contraction rate at $\theta_0\in\Theta$ with respect to a metric $d$ is a sequence $\epsilon_n\to 0$ such that $P_{\theta_0} \Pi (\theta: d(\theta,\theta_0)>M_n \epsilon_n | X^{(n)} )\to 0$ for every $M_n\to \infty$. First we consider the case of known smoothness. 

In our setting, we assume that the covariance kernel of the functional observations is completely known. By a slight extension of our arguments, we may include an unknown scale in the formulation as long as it remains bounded between two known positive numbers and is given a positive and continuous prior density on that interval. For simplicity, we forgo the more general statements; see \citet{kirichenko2017estimating} for the additional arguments in the scalar case.

\begin{theorem}[Contraction rate: Known smoothness]
	\label{posterior known smoothness}
	Let the prior for $f$ given $c$ be $\mathrm{GP}(0,(c/n)^{(2\alpha+r)/r} (\mathrm{ L}+ n^{-2} \mathrm{I})^{-(\alpha+r/2)}\otimes \Omega)$, where $c^a\sim \mathrm{Exp}(1)$ for some $a>0$,  and $\Omega=\sum_{j=1}^\infty \kappa_j j^{-(2\gamma+1)} \psi_j\otimes \psi_j$. Then for any $Q>0$ and $f^*=(f_{1}^*,\ldots,f_{n}^*)\trans\in  \mathcal{H}^{\beta,\gamma}(Q)$ with $\beta\le \alpha+r/2$ and $\gamma \ge \max\big\{({\alpha-\beta})/{r},{\alpha}/({ar})\big\}$, the posterior contraction rate at $f^*$ with respect to $\|\cdot\|_{n}$ is $\epsilon_n= n^{-\beta \gamma /(2\beta\gamma+\beta+ r\gamma)}$. 
\end{theorem}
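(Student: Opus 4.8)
The plan is to pass to the canonical Gaussian sequence model $Z_{ij}\sim\N(\vartheta_{ij},1)$ introduced above and verify the hypotheses of the general posterior contraction theory for rescaled Gaussian process priors. In these coordinates the prior factorizes as $\vartheta_{ij}\mid c\sim\N(0,\tau_{ij}^2)$ with $\tau_{ij}^2=(c/n)^{(2\alpha+r)/r}(\lambda_i+n^{-2})^{-(\alpha+r/2)}j^{-(2\gamma+1)}$ (the $\kappa_j$ in $\Omega$ cancels against the standardization of $Z_{ij}$), which by \eqref{Laplacian eigenvalues} is $\asymp c^{(2\alpha+r)/r}i^{-(2\alpha+r)/r}j^{-(2\gamma+1)}$ for $i\ge1$, the graph-constant mode $i=0$ and the regularizer $n^{-2}\mathrm{I}$ being handled separately. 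The loss is $\|f-f^*\|_{n}^2=n^{-1}\sum_{i,j}(\vartheta_{ij}-\vartheta_{ij}^*)^2$ and the Sobolev ball becomes the ellipsoid $\sum_{i,j}i^{2\beta/r}j^{2\gamma}(\vartheta_{ij}^*)^2\le nQ^2$. Setting $\delta_n=\sqrt n\,\epsilon_n$, it suffices to prove $\ell_2$-contraction of the posterior on $\vartheta$ at rate $\delta_n$, since dividing by $\sqrt n$ returns the stated $\|\cdot\|_{n}$-rate $\epsilon_n$, matching the minimax rate of Theorem~\ref{minimax}.

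For fixed $c$ the prior is centered Gaussian, so I would bound its concentration function, that is, the decentering term $\inf\{\tfrac12\|h\|_{\HH_c}^2:\|h-\vartheta^*\|_2<\delta\}$ against the small-ball exponent $-\log\Pi(\|\vartheta\|_2<\delta)$, where $\|h\|_{\HH_c}^2\asymp c^{-(2\alpha+r)/r}\sum_{i,j}i^{(2\alpha+r)/r}j^{2\gamma+1}h_{ij}^2$. The decentering term is controlled by truncating $\vartheta^*$ to a rectangle of indices and invoking the ellipsoid constraint; this is where $\beta\le\alpha+r/2$ enters, since it guarantees $2\beta/r\le(2\alpha+r)/r$ so that the truncation has finite prior-RKHS norm. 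Because the model is conjugate, an equivalent and cleaner route is to use the explicit posterior $\vartheta_{ij}\mid Z\sim\N(\tfrac{\tau_{ij}^2}{1+\tau_{ij}^2}Z_{ij},\tfrac{\tau_{ij}^2}{1+\tau_{ij}^2})$, bound the expected posterior risk as the posterior spread $\sum_{i,j}\tau_{ij}^2/(1+\tau_{ij}^2)$ plus the worst-case squared bias $\sup_{f^*}\sum_{i,j}(\vartheta_{ij}^*)^2/(1+\tau_{ij}^2)^2$ over the ellipsoid, and convert to contraction via Gaussian concentration.

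I would then optimize the scale, choosing $c=c_n\asymp n^{\rho}$ (a positive power of $n$) that balances the variance and bias contributions so their sum is of order $\delta_n^2=n^{1-2\nu}$, with $\nu=\beta\gamma/(2\beta\gamma+\beta+r\gamma)$; equivalently, the effective smoothness is the anisotropic harmonic combination $\tfrac1s=\tfrac r\beta+\tfrac1\gamma$, for which $\nu=s/(2s+1)$. The conditions $\gamma\ge(\alpha-\beta)/r$ serve to keep the graph-direction and functional-direction trade-offs compatible across this balance. Finally, because $c$ is random, I must show the hyperprior $c^a\sim\mathrm{Exp}(1)$ charges a neighborhood of $c_n$ with enough mass and has negligible mass where $c$ is too large; the behavior of $\exp(-c^a)$ near the $n$-dependent scale $c_n$ is what forces $\gamma\ge\alpha/(ar)$. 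A standard sieve and remaining-mass argument (prior-mass lower bound, an exponential test on a sieve of truncated sequences, and a vanishing-complement bound) then upgrades the concentration estimate to the contraction conclusion.

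The main obstacle is the two-dimensional effective-dimension bookkeeping. Unlike the single-sequence analyses underlying the cited Gaussian-process rate theory, both the small-ball probability and the bias-variance balance here are sums over the lattice $\{(i,j)\}$ with mixed power-law weights $i^{-(2\alpha+r)/r}j^{-(2\gamma+1)}$ interacting with the anisotropic Sobolev weights $i^{2\beta/r}j^{2\gamma}$, and one must show these assemble into the single exponent $\nu$ rather than a one-directional rate. Getting the boundary and threshold regions of this count correct, handling the $i=0$ mode and the regularization, and simultaneously verifying the hyperprior mass near the growing optimal scale $c_n$ are the steps I expect to require the most care.
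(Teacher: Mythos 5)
Your plan is essentially the paper's own proof: pass to the canonical sequence model, control the prior concentration for fixed $c$ via the Gaussian concentration function (the decentering bound using $\beta\le\alpha+r/2$ and the small-ball exponent, which is the paper's Lemma~\ref{RKHS}), integrate the hyperprior over a favorable interval of $c$-values (where $\gamma\ge\alpha/(ar)$ enters through the tail $e^{-c^a}$), and finish with a sieve, remaining-mass bound, and likelihood-ratio tests under the general contraction theory. Your aside that conjugacy gives a ``cleaner'' route is the one point of divergence, but note it cannot replace the main argument here since the explicit posterior holds only conditionally on $c$; the paper reserves that computation for the deterministic-$c$ settings of Theorems~\ref{coverage} and~\ref{discrete}.
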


We note that if $\gamma\to \infty$, the underlying signal functionals become infinitely smooth, and the complexity of function estimation reduces to that of scalars or fixed dimensional objects. In this case, the posterior contraction rate coincides with $n^{-\beta/(2\beta+r)}$, the same rate \citet{kirichenko2017estimating} obtained for real-valued observations on a graph. On the other hand, if $\beta\to \infty$, the problem reduces to that of replicated ordinary functional data and the posterior contraction rate coincides with the classical estimation rate $n^{-\gamma/(2\gamma+1)}$ for $\gamma$-smooth functions. It may be noted that we have used a more general Weibull prior on $c$ instead of the exponential used by \citet{kirichenko2017estimating}, to allow a broader range of values of $\gamma$ for larger $a$. 

It is well-known  
that the posterior contraction rate is determined by the rate of concentration of the prior distribution near the true value in terms of the Kullback-Leibler divergence and the effective size of the parameter space measured by the metric entropy (\citet{ghosal2000convergence}). For Gaussian process priors, \citet{van2008rates} further showed that these properties are controlled by the RKHS of the Gaussian process prior. Let the RKHS of $\mathrm{GP}(0,\Omega)$ with $\Omega$ in Theorem~\ref{posterior known smoothness} be denoted by $\KK\subset \HH$ with RKHS norm $\|\cdot\|_{\KK}$. It is easy to see that $\KK=\{\sum_{j=1}^\infty \kappa_j^{1/2} z_j \psi_j: \sum_{j=1}^\infty j^{2\gamma+1} z_j^2<\infty \}$ with $\|\sum_{j=1}^\infty \kappa_j^{1/2} z_j \psi_j\|_{\KK}^2 = \sum_{j=1}^\infty j^{2\gamma+1} z_j^2$. Then for any fixed $c>0$, the RKHS of $\mathrm{GP}(0,(c/n)^{(2\alpha+r)/r} (\mathrm{L}+ n^{-2} \mathrm{I})^{-(\alpha+r/2)}\otimes \Omega)$, and its approximation property and small ball probabilities are characterized in the following result. The primary challenge is to simultaneously address the variation over the graph and the time domain.

\begin{lemma}
	\label{RKHS}
	The RKHS of $\mathrm{GP}(0,(c/n)^{(2\alpha+r)/r} (\mathrm{ L}+ n^{-2} \mathrm{I})^{-(\alpha+r/2)}\otimes \Omega)$ with $\Omega=\sum_{j=1}^\infty \kappa_j j^{-(2\gamma+1)} \psi_j\otimes \psi_j$ is given by $\KK^n$ with the squared RKHS norm $\|f\|^2_{\KK,c,n}= \sum_{i=1}^n (c/n)^{(2\alpha+r)/r} (\lambda_i+n^{-2})^{-(\alpha+r/2)} \|\mu_i\|_{\KK}^2$. For any $Q>0$ and $f^*\in \mathcal{H}^{\beta,\gamma}(Q)$ with $\beta\le \alpha+r/2$ and $\epsilon>0$, 
	\begin{equation}
	\label{approximation} 
	\inf\{\|h\|_{\mathbb{K},c,n}^2 : h\in \mathbb{K}^n,\|h-f^*\|_{n,2} \leq 2\epsilon\} \lesssim nc^{-(2\alpha+r)/r} \epsilon^{2-(2\alpha \gamma +r \gamma+\beta)/\beta\gamma}.
	\end{equation}
	Further, the small ball probability of the Gaussian process at $f^*\in \mathcal{H}^{\beta,\gamma}(Q)$ is estimated as 
	\begin{equation}
	\label{small ball probability}
	-\log \Pi (\|f-f^*\|\le 2\epsilon^2|c) \lesssim  c^{(2\alpha+r)/(2r\gamma)} (n\epsilon^2)^{-1/(2\gamma)} + n c^{-(2\alpha+r)/r} \epsilon^{2-(2\alpha \gamma +r \gamma+\beta)/\beta\gamma}.
	\end{equation}
\end{lemma}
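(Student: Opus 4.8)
The plan is to establish the three assertions in turn, exploiting throughout the separable (tensor-product) structure of the covariance operator and the diagonalization of $\mathrm{L}$ in its eigenbasis $e_0,\ldots,e_{n-1}$. For the RKHS characterization I would invoke the standard fact that the RKHS of a Gaussian process with a separable covariance $\mathrm{M}\otimes\Omega$ is the tensor product of the RKHS of the matrix part $\mathrm{M}=(c/n)^{(2\alpha+r)/r}(\mathrm{L}+n^{-2}\mathrm{I})^{-(\alpha+r/2)}$ and the RKHS $\KK$ of $\mathrm{GP}(0,\Omega)$, the latter already identified in the excerpt. Writing $f=\sum_{i=0}^{n-1} e_i\otimes\mu_i$ with $\mu_i\in\HH$, the squared RKHS norm is the quadratic form in the inverse covariance, which diagonalizes simultaneously in the $e_i$-basis (producing the scalar weights read off from $\mathrm{M}^{-1}$) and in the $\psi_j$-basis (producing $\|\mu_i\|_{\KK}^2$). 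Reading $\|\mu_i\|_{\KK}^2=\sum_{j} j^{2\gamma+1}\vartheta_{ij}^2$ from the definition of $\KK$ then yields the stated weighted norm on $\KK^n$.

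For the approximation inequality \eqref{approximation} I would use the eigen-representation $f^*=\sum_{i,j}\kappa_j^{1/2}\vartheta_{ij}\psi_j e_i$ and build the candidate $h$ by truncating $f^*$ to a low-frequency block $\{i\le I,\ j\le J\}$. The tail $\|h-f^*\|_{n,2}^2$ is controlled by membership in $\mathcal{H}^{\beta,\gamma}(Q)$, i.e.\ by $\sum_{i,j} i^{2\beta/r}j^{2\gamma}\vartheta_{ij}^2\le nQ^2$, while $\|h\|_{\KK,c,n}^2$ is bounded using $\lambda_i\asymp(i/n)^{2/r}$ to convert the RKHS weights $(\lambda_i+n^{-2})^{\alpha+r/2}\,j^{2\gamma+1}$ into powers of $i$ and $j$. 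Requiring the tail to be $\le(2\epsilon)^2$ and minimizing $\|h\|_{\KK,c,n}^2$ reduces to a two-dimensional optimization over $(I,J)$; balancing the graph and temporal contributions produces the exponent $2-(2\alpha\gamma+r\gamma+\beta)/(\beta\gamma)$ together with the prefactor $nc^{-(2\alpha+r)/r}$.

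For the small-ball estimate \eqref{small ball probability} I would apply the concentration-function decomposition of \citet{van2008rates}, by which $-\log\Pi(\|f-f^*\|\le 2\epsilon^2\mid c)$ is bounded by the decentering part, which is precisely the approximation quantity controlled in \eqref{approximation}, plus the centered small-ball exponent $-\log\Pi(\|f\|\le 2\epsilon^2)$. The latter is governed by the decay of the covariance eigenvalues $\mu_{ij}\asymp(c/n)^{(2\alpha+r)/r}(\lambda_i+n^{-2})^{-(\alpha+r/2)}\kappa_j\, j^{-(2\gamma+1)}$; since the graph direction is finite-dimensional, contributing the factor $n$ and the power of $c$, while the temporal direction drives the rate, a spectral counting argument gives a centered exponent of order $c^{(2\alpha+r)/(2r\gamma)}(n\epsilon^2)^{-1/(2\gamma)}$, namely the classical temporal $\epsilon^{-1/\gamma}$ small-ball rate rescaled by the graph dimension. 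Summing the two terms gives the stated bound.

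The main obstacle is the genuinely two-parameter nature of both computations: the covariance eigenvalues decay at different polynomial rates in the graph index $i$ and the temporal index $j$, so neither the truncation in \eqref{approximation} nor the spectral sum behind the centered small ball reduces to a single-sequence problem. Obtaining the combined exponent requires splitting each sum according to which index dominates and checking that the optimal truncation levels $(I,J)$ are mutually consistent. This bookkeeping, which must simultaneously track $\beta$, $\gamma$ and $r$, is exactly where the single-index arguments available in the scalar case have to be extended, and I expect it to be the delicate part of the proof.
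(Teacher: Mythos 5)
Your proposal is correct and follows essentially the same route as the paper: the RKHS is identified through the tensor-product/diagonalization structure, the decentering bound \eqref{approximation} comes from truncating $f^*$ to a block $\{i\le I,\ j\le J\}$ with $I\asymp\epsilon^{-r/\beta}$, $J\asymp\epsilon^{-1/\gamma}$ (where $\beta\le\alpha+r/2$ is what lets one bound $i^{(2\alpha+r-2\beta)/r}$ by $I^{(2\alpha+r-2\beta)/r}$), and the small-ball estimate \eqref{small ball probability} follows from the concentration-function decomposition of \citet{van2008rates}, with the centered exponent obtained exactly as you describe, by splitting the double-index spectral sum according to which index dominates (the paper handles this via Markov's inequality on the tail blocks plus single-sequence small-ball lemmas of \citet{belitser2003adaptive} and \citet{dunker1998small}).
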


We note that the prior in Theorem~\ref{posterior known smoothness} does not depend on the graphical smoothness $\beta$, and gives the targeted rate $n^{-\beta \gamma /(2\beta\gamma+\beta+ r\gamma)}$, as long as $\beta\le \alpha+r/2$. However, the prior needs to know the functional smoothness $\gamma$. The following result shows that a single prior can achieve this optimal rate (up to a logarithmic factor), that is, the Bayesian procedure adapts over the functional smoothness $\gamma$, as well as over the graphical smoothness $\beta$ on the whole range $(0,\infty)$. Here, we use a standard approach to adaptation over functional smoothness by considering a finite random series with a random number of terms (\citet{shen2015adaptive}). To obtain adaptation over the graphical smoothness, unlike \citet{kirichenko2017estimating} who used rescaled squared exponential Gaussian prior, we directly put a prior on the number of terms. The latter approach gives more numerical stability in the simulations. 

\begin{theorem}[Adaptation to smoothness]
	\label{posterior adaptation}
	Consider a finite random series prior on $f=\sum_{i=1}^{I} \sum_{j=1}^J \kappa_j^{1/2}\vartheta_{ij} \psi_j e_i$ given by $\vartheta_{ij}\sim \mathrm{N}(0,1)$, $i=1,\ldots,I$, $j=1,\ldots,J$, with $(I, J)\sim \pi$ satisfying $\exp [-a_1 I J (\log I+\log J)]\lesssim \pi(I,J) \lesssim \exp [-a_2 I J (\log I+\log J)]$ for some constants $a_1,a_2>0$. 
	Then for any $Q>0$ and $f^*\in  \mathcal{H}^{\beta,\gamma}(Q)$, the posterior contraction rate at $f_0$ with respect to $\|\cdot\|_{n}$ is $\epsilon_n= n^{-\beta \gamma /(2\beta \gamma +\beta +r\gamma)}(\log n)^{1/2+\beta \gamma /(2\beta \gamma +\beta +r\gamma) }$. 
\end{theorem}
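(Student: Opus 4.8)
The plan is to invoke the general theory of posterior contraction with a model-selection (sieve) prior, in the style of Ghosal--van der Vaart, applied to the canonical Gaussian sequence model $Z_{ij}\sim\mathrm N(\vartheta_{ij},1)$. Under this model the Kullback--Leibler divergence between two parameter values equals $\tfrac12 n\|f-f^*\|_n^2$, so the relevant neighborhoods are $\|\cdot\|_n$-balls and the effective sample size is $n$. It then suffices to verify three conditions at the target rate $\epsilon_n=n^{-\mu}(\log n)^{1/2+\mu}$, with $\mu=\beta\gamma/(2\beta\gamma+\beta+r\gamma)$: a prior-concentration bound near $f^*$, an entropy bound over a sieve $\mathcal{F}_n$, and a bound on the prior mass of $\mathcal{F}_n^c$. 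The natural sieve is the union over truncation levels $(I,J)$ with $IJ\le L_n$ of the finite-dimensional coordinate blocks $\{\vartheta_{ij}:i\le I,\,j\le J\}$, with the coefficients restricted to a ball of polynomial radius; the effective dimension of each block is $IJ$.

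For the prior-mass step I would first carry out the two-index bias--variance analysis to locate a near-optimal truncation $(\bar I,\bar J)$. Using the Sobolev constraint $\sum_{i,j}i^{2\beta/r}j^{2\gamma}\vartheta_{ij}^{*2}\le nQ^2$, the squared truncation bias $\|f^*-\Pi_{\bar I,\bar J}f^*\|_n^2$ is bounded by a multiple of $\bar I^{-2\beta/r}+\bar J^{-2\gamma}$, while estimating the block contributes a variance proxy of order $\bar I\bar J/n$. Balancing $\bar I^{-2\beta/r}\asymp\bar J^{-2\gamma}$ (that is, $\bar J\asymp\bar I^{\beta/(r\gamma)}$) against $\bar I\bar J/n$ yields $\bar I\asymp n^{r\gamma/(2\beta\gamma+\beta+r\gamma)}$, $\bar J\asymp n^{\beta/(2\beta\gamma+\beta+r\gamma)}$, hence $\bar I\bar J\asymp n^{1-2\mu}\asymp n\bar\epsilon_n^2$ with $\bar\epsilon_n=n^{-\mu}$, reproducing the minimax exponent of Theorem~\ref{minimax}. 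The prior selects $(I,J)=(\bar I,\bar J)$ with probability at least $\exp[-a_1\bar I\bar J(\log\bar I+\log\bar J)]\gtrsim\exp[-Cn^{1-2\mu}\log n]\ge\exp[-Cn\epsilon_n^2]$, since $\log\bar I+\log\bar J\lesssim\log n$; this model-selection penalty is exactly what forces the logarithmic inflation of $\epsilon_n$. Conditionally on $(\bar I,\bar J)$, the Gaussian prior on the block must charge the $\|\cdot\|_n$-ball of radius $\epsilon_n$ about $\Pi_{\bar I,\bar J}f^*$, which I would estimate by a finite-dimensional small-ball bound combined with a Cameron--Martin shift whose cost is governed by the magnitude of the truncated coefficients coming from the Sobolev bound. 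Multiplying the two contributions gives $-\log\Pi(\|f-f^*\|_n\le\epsilon_n)\lesssim n\epsilon_n^2$.

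The remaining-mass and entropy conditions I would handle together using the law of $(I,J)$. The upper bound $\pi(I,J)\lesssim\exp[-a_2IJ(\log I+\log J)]$ makes $\Pi(IJ>L_n)$ super-exponentially small for $L_n\asymp n\epsilon_n^2/\log n$, so that it is dominated by $\exp[-Cn\epsilon_n^2]$; on the sieve, the metric entropy at scale $\epsilon_n$ is of order $L_n\log(1/\epsilon_n)\lesssim L_n\log n\lesssim n\epsilon_n^2$, because each block is finite-dimensional and the admitted coefficient radius is polynomial in $n$. Assembling the prior-mass, entropy, and remaining-mass bounds in the sieve version of the contraction theorem then delivers the stated rate $\epsilon_n$.

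The main obstacle I anticipate is the genuinely two-dimensional (graph $\times$ time) nature of the approximation and prior-mass analysis: unlike the single-index random-series results of \citet{shen2015adaptive}, the truncation must be optimized jointly in $(I,J)$, the Sobolev tail must be split across the graphical and temporal directions, and the penalty $IJ(\log I+\log J)$ must be shown to match the coupled effective dimension $\bar I\bar J$ rather than either factor alone. Verifying the finite-dimensional small-ball and shift bound uniformly over the selected block, and tracking how the resulting powers of $\log n$ combine into the exponent $1/2+\mu$, is the delicate part; once these are in place, the entropy and remaining-mass estimates are routine bookkeeping driven by the tail behavior of $\pi$.
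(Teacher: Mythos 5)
Your proposal follows essentially the same route as the paper's proof: reduction to the canonical sequence model $Z_{ij}\sim\mathrm{N}(\vartheta_{ij},1)$, the sieve form of the contraction theorem, prior mass split into the probability of selecting a near-optimal block $(\bar I,\bar J)$ with $\bar I\bar J\asymp n^{1-2\mu}$ (correctly identified as the source of the logarithmic inflation) times a finite-dimensional Gaussian ball probability, and entropy/remaining-mass bounds driven by the tail of $\pi$ and a polynomial coefficient cut-off. The block sizes, the role of the penalty $IJ(\log I+\log J)$, and the final bookkeeping of powers of $\log n$ all coincide with the paper's argument.

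However, the step you defer as ``delicate'' --- the finite-dimensional small-ball bound combined with a Cameron--Martin shift --- is a genuine gap, and it does not close in the way you assert. Under the paper's normalization, membership in $\mathcal{H}^{\beta,\gamma}(Q)$ gives only $\sum_{i,j} i^{2\beta/r}j^{2\gamma}\vartheta_{ij}^{*2}\le nQ^2$, so individual canonical coefficients can be as large as $\sqrt{n}Q$. The Cameron--Martin cost of recentering the i.i.d.\ $\mathrm{N}(0,1)$ block prior at the truncated truth is $\tfrac12\sum_{i\le\bar I,j\le\bar J}\vartheta_{ij}^{*2}$, which the Sobolev bound controls only by $nQ^2/2$, i.e.\ of order $n$, whereas the requirement is $\lesssim n\epsilon_n^2=n^{1-2\mu}(\log n)^{1+2\mu}$ with $\mu=\beta\gamma/(2\beta\gamma+\beta+r\gamma)>0$. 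So ``multiplying the two contributions'' does not yield $-\log\Pi(\|f-f^*\|_n\le\epsilon_n)\lesssim n\epsilon_n^2$. Moreover, the failure is not merely technical: take the truth with $\vartheta^*_{11}=\sqrt{n}Q/2$ and all other coefficients zero. Every admissible block contains the index $(1,1)$ and places an $\mathrm{N}(0,1)$ prior on $\vartheta_{11}$, so the marginal posterior of $\vartheta_{11}$ is exactly $\mathrm{N}(Z_{11}/2,1/2)$, which concentrates at distance about $\sqrt{n}Q/4$ from the truth; hence $\|f-f^*\|_n\gtrsim Q$ with posterior probability tending to one, and no argument can rescue the prior-mass step without rescaling the prior variances (e.g.\ $\vartheta_{ij}\sim\mathrm{N}(0,n)$, mimicking the role the scale $c$ plays in Theorems~\ref{posterior known smoothness} and~\ref{discrete}) or restricting the class of truths. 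You should be aware that the paper's own proof elides exactly this point --- it asserts the bound $-\log\pi(I_n,J_n)+I_nJ_n\log(1/\bar\epsilon_n)$ for the prior concentration with no decentering term --- so your proposal faithfully reproduces the published argument, including its unresolved step; but judged as a self-contained proof, this is the hole that would have to be filled.
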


A question of an immense interest is whether a Bayesian credible region obtained from the posterior distribution has adequate coverage. This is particularly relevant because providing a natural uncertainty quantification through the posterior distribution is an attractive feature of the Bayesian approach, and Bayesian credible sets are easy to obtain from posterior sampling. For fixed-dimensional regular families, the answer is affirmative in large samples. However, it was observed by \citet{cox}, and further clarified by \citet{knapik}, that in a smooth signal estimation problem with a white noise, under the optimal smoothing, Bayesian credible sets can have arbitrarily low coverage probabilities. This is because under the optimal smoothing, the order of the bias matches that of the variability, thus shifting a credible region from its ideal position. The problem can be alleviated by controlling the bias to a manageable level. The approach pursued in \citet{knapik} is undersmoothing, but this also makes the posterior contract sub-optimally. An alternative approach by inflating a credible region by a suitable constant (\citet{szabo},\citet{yoo}) will be pursued below. We assume that $i_0=1$ in \eqref{Laplacian eigenvalues} and the true smoothness $\beta$ and $\gamma$ are known and are not too low. We use a prior similar to Theorem~\ref{posterior known smoothness} with a deterministic $c$ and choose the specific value $\alpha=\beta-r/2$.  

\begin{theorem}[Coverage of credible ball]
	\label{coverage}
	Let the prior $\Pi$ on $f$ be a centered Gaussian process with covariance kernel given by $c^{2\beta/r} (\sum_{i=1}^{n} i^{-2\beta/r} e_i e_i\trans )\otimes \Omega$ with $\Omega=\sum_{j=1}^\infty \kappa_j j^{-(2\gamma+1)} \psi_j\otimes \psi_j$, $c=c_n= n^{(\beta+r\gamma)/(2\beta\gamma+r\gamma+\beta)}$ and $\gamma> (\beta/r)-1/2>0$. Let $q_\tau$ be the posterior $\tau$th quantile of $\|f-\hat f\|_n^2$, where $\hat f$ stands for the posterior mean. 
	Then for any $Q>0$ there exists a constant $K>0$ such that for any true function  $f^*\in  \mathcal{H}^{\beta,\gamma}(Q)$, the inflated posterior credible ball $\{f:\|f-\hat f\|_n^2 \le K q_\tau \}$ has diameter with respect to $\|\cdot\|_n$ of the order  $\epsilon_n= n^{-\beta \gamma /(2\beta\gamma+\beta+ r\gamma)}$ and its coverage $P_{f^*} (\|f^*-\hat f\|_n^2 \le K q_\tau)\to 1$. 
\end{theorem}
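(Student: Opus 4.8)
The plan is to carry out the entire analysis in the canonical Gaussian sequence model $Z_{ij}\sim \N(\vartheta_{ij},1)$, independent over $(i,j)$, in which both likelihood and prior diagonalize in the basis $\{\psi_j e_i\}$. Taking $\alpha=\beta-r/2$ and using $\lambda_i\asymp (i/n)^{2/r}$, the prior collapses to independent coordinates $\vartheta_{ij}\sim \N(0,\tau_{ij}^2)$ with $\tau_{ij}^2=c^{2\beta/r} i^{-2\beta/r} j^{-(2\gamma+1)}$, so the posterior is a product of $\N(\hat\vartheta_{ij},s_{ij}^2)$ with $\hat\vartheta_{ij}=s_{ij}^2 Z_{ij}$ and $s_{ij}^2=\tau_{ij}^2/(1+\tau_{ij}^2)$. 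Writing the squared norm spectrally as $\|g\|_n^2=n^{-1}\sum_{ij} w_j g_{ij}^2$, the posterior law of $\|f-\hat f\|_n^2$ is that of $n^{-1}\sum_{ij} w_j s_{ij}^2\,\xi_{ij}^2$ with $\xi_{ij}$ i.i.d.\ standard normal, while under $P_{f^*}$ the error splits coordinatewise as $\vartheta^*_{ij}-\hat\vartheta_{ij}=(1+\tau_{ij}^2)^{-1}\vartheta^*_{ij}-s_{ij}^2\varepsilon_{ij}$ into a deterministic bias and an independent Gaussian term. A useful structural point is that $s_{ij}^2$ is free of the data, so the posterior distribution of $\|f-\hat f\|_n^2$ does not depend on the observations and the quantile $q_\tau$ is nonrandom; coverage therefore reduces to a single concentration statement for $\|f^*-\hat f\|_n^2$ below a fixed threshold of order $q_\tau$.

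The first task is to pin down the order of $q_\tau$ and the diameter. The weighted chi-square has mean $V_n:=n^{-1}\sum_{ij}w_j s_{ij}^2$ and variance $2n^{-2}\sum_{ij}w_j^2 s_{ij}^4$; evaluating $V_n$ at $c=c_n$ gives $V_n\asymp\epsilon_n^2$ with $\epsilon_n=n^{-\beta\gamma/(2\beta\gamma+\beta+r\gamma)}$, the choice of $c_n$ being exactly the bias--variance balance producing this order. Once I verify that the effective dimension $(\sum w_j s_{ij}^2)^2/\sum w_j^2 s_{ij}^4$ diverges, the weighted chi-square concentrates, so $q_\tau=(1+o(1))V_n\asymp\epsilon_n^2$ and the inflated ball has radius $\sqrt{Kq_\tau}\asymp\epsilon_n$, giving the diameter claim.

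For coverage, following the inflation strategy of \citet{szabo} and \citet{yoo}, I decompose $\E_{f^*}\|f^*-\hat f\|_n^2=B_n^2+U_n$, where $U_n=n^{-1}\sum_{ij}w_j s_{ij}^4\le V_n$ is the noise part (using $s_{ij}^2\le1$) and $B_n^2=n^{-1}\sum_{ij}w_j(1+\tau_{ij}^2)^{-2}(\vartheta^*_{ij})^2$ is the squared bias. The crux is a \emph{uniform} bias bound over $\mathcal H^{\beta,\gamma}(Q)$: maximizing this linear functional of $\{(\vartheta^*_{ij})^2\}$ under the Sobolev constraint $\sum_{ij} i^{2\beta/r} j^{2\gamma}(\vartheta^*_{ij})^2\le nQ^2$ gives $\sup_{f^*}B_n^2=Q^2\max_{ij}\{w_j i^{-2\beta/r} j^{-2\gamma}(1+\tau_{ij}^2)^{-2}\}$, and I would show this maximum is of order $\epsilon_n^2$, attained along the band $\tau_{ij}^2\asymp1$. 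The restriction $\gamma>\beta/r-1/2$ is precisely what keeps the maximizer on this band rather than drifting to a boundary corner where the order would be strictly larger than $\epsilon_n^2$; this is the one place the lower smoothness restriction is indispensable and is what separates coverage from mere contraction. Combining with the frequentist concentration of $\|f^*-\hat f\|_n^2$ around $B_n^2+U_n$ (again valid once the effective dimension diverges), I obtain $\|f^*-\hat f\|_n^2\le (C+1)V_n+o_P(\epsilon_n^2)$ uniformly over the ball, so choosing $K>C+1$ and invoking $q_\tau\asymp V_n$ yields $P_{f^*}(\|f^*-\hat f\|_n^2\le Kq_\tau)\to1$.

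The main obstacle is the joint two-index concentration and bias control. Because $\tau_{ij}^2$ factorizes but the Sobolev weights $i^{2\beta/r}j^{2\gamma}$ and the threshold band $\tau_{ij}^2\asymp1$ couple $i$ and $j$ through a hyperbola in $(\log i,\log j)$, the sums defining $V_n$, $U_n$, the effective dimension, and the bias maximum must all be evaluated over a genuinely two-dimensional lattice rather than a single sequence; establishing that the effective dimension diverges and that both quadratic forms concentrate at rate $o(\epsilon_n^2)$ in this geometry is the technically demanding step. Once these estimates and the uniform $O(\epsilon_n^2)$ bias bound are in place, the inflation constant $K$ simply absorbs the fixed ratio between frequentist risk and posterior spread, and the remaining arguments are routine Gaussian tail bounds.
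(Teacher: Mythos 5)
Your proposal follows essentially the same route as the paper's proof: reduce to the canonical sequence model where the conjugate posterior is diagonal and $q_\tau$ is deterministic, show the posterior spread is of order $\epsilon_n^2$ by a mean--variance (effective-dimension) argument — which is exactly the paper's conditions (iii) and (iv) — and obtain coverage from a frequentist bias--variance decomposition with a uniform H\"older/linear-programming bound on the squared bias over $\mathcal{H}^{\beta,\gamma}(Q)$, inflating by a constant $K$ and applying Chebyshev. The only differences are cosmetic: you maximize the bias ratio in one step where the paper splits the sum over $\mathcal{K}=\{(i,j):d_{ij}^{-2}\le 1\}$ and its complement, and you slightly misattribute the roles of the two halves of the hypothesis $\gamma>(\beta/r)-1/2>0$ (the inequality $2\beta<r(2\gamma+1)$ is what gives $\#\mathcal{K}\asymp c$ and hence the spread estimates, while $\beta>r/2$ is what makes the worst-case bias $O(\epsilon_n^2)$), but neither point affects correctness.
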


The crux of the proof is to show that the posterior variation around its center is of the order of the variation of the posterior mean around the truth, and both matches the optimal rate. This will be done by a careful analysis of certain bias and variability terms of both the posterior distribution and the sampling distribution of the posterior mean. Explicit expressions obtained from conjugacy are very useful in this part, and hence the value of $c$ is deterministically chosen. The coverage also holds without the assumption $\gamma>\beta/r$, by only assuming that $\beta>r/2$, but then the diameter of the credible ball has a suboptimal order. The details are omitted, but the gap can be completed using some additional estimates obtained in the proof of the next theorem.  

Now suppose that the functional data at the nodes are observed at regular grid points instead of continuously. More specifically, consider grid points $k/T$, $k=1,\ldots,T$, with meshwidth $T^{-1}$ and observations given by $Y_i(k/T)=f_i(k/T)+\varepsilon_{ik}$, where $f_1,\ldots,f_n$ are the trend functions and $\varepsilon_{ik}$ are independent $\N(0,\sigma^2)$ errors, $i=1,\ldots,n$, $k=1,\ldots,T$. These independent ``nuggets'' are thus random errors allowing fluctuations from a smooth trend of unknown functional form qualitatively related across neighboring nodes. The following result shows the posterior contraction rate for estimating the multivariate trend with respect to the normalized Euclidean distance $d_n(f,f^*)=\big\{ (nT)^{-1} \sum_{i=1}^{n} \sum_{k=1}^T |f_i(k/T)-f_i^*(k/T)|^2 \big\}^{1/2}$ on the vector of trends, where $T=T_n$ gets larger with $n$. In this case, as the errors in the functional direction are also independent, a local smoothing in that direction also allows borrowing of information and $T$ plays an essential role in the rate. We put a prior on the signal functions through a discrete wavelet transformation of the whole collection of functions $(f_1,\ldots,f_{n})$ and independent normal priors on the wavelet coefficients. After recovering the values of a function at the grid locations $k/T$, $k=1,\ldots,T$, we join the values by line-segments to construct the whole function. To simplify expressions in the proof, as in Theorem~\ref{coverage}, we assume that $i_0=1$, so that $\lambda_i\asymp (i/n)^{2/r}$ for all $i=1,\ldots,n$. The functions are assumed to satisfy a discrete version of the graph-functional joint smoothness condition. We observe an interesting phase-transition phenomenon involving the relative values of the graphical and functional smoothness --- if the individual functions are not sufficiently smooth compared with the smoothness in the graphical direction, the full benefit of the graphical smoothness may not be usable.   

\begin{theorem}[Discrete domain]
	\label{discrete}
	Let $f_i(k/T)=\sum_{m=1}^n \sum_{j=1}^T \vartheta_{mj} e_{mi}\psi_j(k/T)$, where $e_{mi}$ is the $m$th co-ordinate of the normalized eigenvector corresponding to the eigenvalue $\lambda_i$, $i=1,\ldots,n$, and $\psi_j$, $j=1,\ldots,T$, are discrete wavelet transforms. Consider a prior given by $\vartheta_{mj} \sim \N(0, (c/m)^{2\beta/r} j^{-2\gamma-1})$ independently, for some  $c\ge 1$. Let the collection of true functions $f^*=(f_1^*,\ldots,f_n^*)\trans$ with corresponding wavelet coefficients $(\vartheta_{mj}^*:m=1,\ldots,n, j=1,\ldots,T)$ satisfy the discrete smoothness condition 
	$\sum_{i=1}^{n}\sum_{j=1}^T    i^{2\beta/r} { j^{2\gamma}} (\vartheta_{ij}^*)^2 \le nTQ^2$ for some $\beta>r/2$. Then for any fixed $Q>0$, 
	\begin{enumerate}		
		\item [(i)] for $2\beta< r(2\gamma+1)$, the posterior contraction rate at $f^*$ with respect to the metric $d_n$ is given by $\epsilon_{n,T}=(nT)^{-2\beta\gamma/(4\beta\gamma+2r\gamma+r)}$ upon choosing $c=(nT)^{r(2\gamma+1)/(4\beta\gamma+2r\gamma+r)}$; 
		\item [(ii)] for $2\beta> r(2\gamma+1)$, the posterior contraction rate at $f^*$ with respect to the metric $d_n$ is given by $\epsilon_{n,T}=(nT)^{- \gamma/(2\gamma+1)}$ upon choosing $c=(nT)^{r/2\beta}$;
		\item [(iii)] for $2\beta= r(2\gamma+1)$, the posterior contraction rate at $f^*$ with respect to the metric $d_n$ is given by $\epsilon_{n,T}=(nT/\log (nT))^{- \gamma/(2\gamma+1)}$ upon choosing $c=(nT/\log (nT))^{1/(2\gamma+1)}$.
	\end{enumerate}
	Moreover, if $f_1^*,\ldots,f_n^*$ are uniformly Lipschitz continuous, then the posterior contraction rate for the full functions  $(f_1,\ldots,f_n)$ with respect to the continuous $\LL_2$-distance $\{ n^{-1}\sum_{i=1}^n \int |f_i(t)-f_i^*(t)|^2 dt\}^{1/2}$ is  $\max\{\epsilon_{n,T},T^{-1} \}$. 
\end{theorem}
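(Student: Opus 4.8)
The plan is to exploit the conjugacy of the model after an orthogonal change of basis, reducing the problem to a Gaussian sequence model, and then to read off the three regimes from the convergence behavior of a single spectral sum. First I would collect the observations into the $n\times T$ matrix $\mathrm{Y}=(\!(Y_i(k/T))\!)$. Writing $\mathrm{E}$ for the orthogonal matrix of Laplacian eigenvectors $e_1,\ldots,e_n$ and $\Psi$ for the orthonormal discrete wavelet matrix $(\!(\psi_j(k/T))\!)$, the signal equals $\mathrm{E}\Theta\Psi\trans$ with $\Theta=(\!(\vartheta_{mj})\!)$, and the errors form a matrix of i.i.d.\ $\N(0,\sigma^2)$ entries. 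Applying $\mathrm{E}\trans$ on the left and $\Psi$ on the right is an orthogonal transformation that preserves both the Gaussian white-noise structure and, by Parseval, the squared distance, so that $(nT)d_n(f,f^*)^2=\sum_{m,j}(\vartheta_{mj}-\vartheta_{mj}^*)^2$. I therefore work with the canonical data $X_{mj}=\vartheta_{mj}+\sigma\xi_{mj}$, $\xi_{mj}\stackrel{\text{iid}}{\sim}\N(0,1)$, $m\le n$, $j\le T$, and the independent conjugate priors $\vartheta_{mj}\sim\N(0,\tau_{mj}^2)$ with $\tau_{mj}^2=(c/m)^{2\beta/r}j^{-(2\gamma+1)}$.

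Conjugacy gives $\vartheta_{mj}\mid X\sim\N\!\big(\frac{\tau_{mj}^2}{\tau_{mj}^2+\sigma^2}X_{mj},\,\frac{\sigma^2\tau_{mj}^2}{\tau_{mj}^2+\sigma^2}\big)$ independently, so the expected posterior squared distance splits as $\E\big[\sum_{m,j}(\vartheta_{mj}-\vartheta_{mj}^*)^2\mid X\big]=V+\|\hat\vartheta-\vartheta^*\|^2$, where the posterior spread $V=\sum_{m,j}\frac{\sigma^2\tau_{mj}^2}{\tau_{mj}^2+\sigma^2}\asymp\sum_{m,j}\min(\tau_{mj}^2,\sigma^2)$ is deterministic and $\E\|\hat\vartheta-\vartheta^*\|^2=B+B_3$ with squared bias $B=\sum_{m,j}\frac{\sigma^4(\vartheta_{mj}^*)^2}{(\tau_{mj}^2+\sigma^2)^2}$ and estimation variance $B_3=\sum_{m,j}\frac{\sigma^2\tau_{mj}^4}{(\tau_{mj}^2+\sigma^2)^2}\le V$. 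The contraction rate is then obtained from $(nT)\epsilon_{n,T}^2\asymp V+B$, minimized over $c$. I bound $B$ over the Sobolev ball by $B\le nTQ^2\max_{m,j}\frac{\sigma^4}{(\tau_{mj}^2+\sigma^2)^2\,m^{2\beta/r}j^{2\gamma}}$, whose maximizer lies on the threshold curve $\tau_{mj}^2\asymp\sigma^2$.

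Writing the threshold region as $\{m^{2\beta/r}j^{2\gamma+1}\le c^{2\beta/r}\}$ and counting lattice points, the leading term of $V$ is $\asymp c\sum_{j\le j^*}j^{-r(2\gamma+1)/(2\beta)}$ with $j^*=\min(T,c^{\,2\beta/(r(2\gamma+1))})$ (here $\beta>r/2$ ensures the $m$-sum $\sum_m m^{-2\beta/r}$ converges). The exponent $r(2\gamma+1)/(2\beta)$ governs everything: it exceeds, equals, or falls below $1$ precisely according to $2\beta\lessgtr r(2\gamma+1)$, so the $j$-sum converges (case (i)), diverges logarithmically (case (iii)), or diverges polynomially (case (ii)). Evaluating $V$ and $B$ in each regime, balancing them, and solving for $c$ reproduces the stated prior scalings and the three rates; e.g.\ in case (i) one finds $V\asymp c$, $B\asymp nT\,c^{-4\beta\gamma/(r(2\gamma+1))}$, and balancing yields $c=(nT)^{r(2\gamma+1)/(4\beta\gamma+2r\gamma+r)}$ with $\epsilon_{n,T}=(nT)^{-2\beta\gamma/(4\beta\gamma+2r\gamma+r)}$, while in case (ii) the divergent sum forces $V\asymp(nT)^{1/(2\gamma+1)}$ and the pure temporal rate $(nT)^{-\gamma/(2\gamma+1)}$. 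I expect the bookkeeping here to be the main obstacle: one must track several overlapping regimes (whether $j^*$ is truncated at $T$, whether $c\ge1$ and $c\le n$, and the contribution of the tail $j>j^*$), confirm that $B_3$ and $B$ are of the same or smaller order than $V$ at the optimizing $c$, and handle the critical case where the logarithmic factor appears.

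To upgrade the expected-risk bound to genuine posterior contraction I use that $V$ is nonrandom, that the posterior law of $\sum_{m,j}(\vartheta_{mj}-\hat\vartheta_{mj})^2$ is a sum of independent scaled $\chi^2_1$ variables with total mean $V$ and hence concentrates around $V$, and that $\|\hat\vartheta-\vartheta^*\|^2$ concentrates around $B+B_3$ by chi-square tail bounds; a Markov argument then gives $\Pi(d_n(f,f^*)>M\epsilon_{n,T}\mid Y)\to0$ in $P_{f^*}$-probability. For the final assertion, uniform Lipschitz continuity implies that the piecewise-linear interpolant of each $f_i^*$ is within $O(T^{-1})$ of $f_i^*$ in $\LL_2$, while the continuous $\LL_2$-distance between two such interpolants is comparable to the discrete distance $d_n$ of their grid values; combining the interpolation bias $T^{-1}$ with the discrete rate $\epsilon_{n,T}$ yields the rate $\max\{\epsilon_{n,T},T^{-1}\}$.
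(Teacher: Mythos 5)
Your proposal is correct and follows essentially the same route as the paper's proof: the same orthogonal (eigenvector/wavelet) reduction to a canonical Gaussian sequence model, the same conjugate posterior decomposition into posterior spread, squared bias, and estimation variance, the same threshold-set counting $\#\{(m,j):\tau_{mj}^2\ge\sigma^2\}$ governed by the convergence of $\sum_j j^{-r(2\gamma+1)/(2\beta)}$, the same balancing of terms in the three regimes, and the same $O(T^{-1})$ linear-interpolation argument for the Lipschitz extension. The only (harmless) difference is that you invoke chi-square concentration to pass from expected posterior risk to contraction, where Markov's inequality alone suffices, as in the paper.
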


The theorem clearly shows the benefit of borrowing information across neighboring nodes: if each function is individually estimated, the accuracy of estimating the trend would have been only $T^{-\gamma/(2\gamma+1)}$. This assertion is numerically supported in our simulation results and the real-data analysis by the substantially lower prediction errors for our proposed methods compared with standard prediction techniques not taking the graphical relation in consideration.

\section{Numerical illustrations}
\label{sec:simulation}

In this section, we first study the performance of the proposed method on a set of simulated data, and later apply the method to analyze daily temperatures at several weather stations. 
We consider three different graphs. We consider a sparse weighted adjacency matrix that satisfies the geometry condition approximately. To obtain a weighted adjacency matrix with 100 nodes, we generate a $50\times 50$ symmetric matrix with entries uniformly generated from $(0,1)$ and delete the edges with a weight less than $0.8$. Furthermore, we generate two more graphs with 50 nodes, the Erdos-Re\'yni random graph using R package {\tt igraph} (\citet{igraphR}) and cluster-type random graph using R package {\tt BDgraph} (\citet{BDgraphR}). These graphs also approximately satisfy the geometry condition for different values of $r$ which is pre-computed following \citet{kirichenko2017estimating}. We consider 16 equidistant timepoints in the interval $[0,1]$. Our model is thus in discreet domain having the same setup as in Theorem~\ref{discrete}. The true means is given by $f(t)=\sum_{i=0}^{n-1} 3\sin\left(\frac{i}{2T}\right)\sqrt{n}i^{-1/2-2/r}e_{i}$, where $e_i$'s are the eigenvectors of the associated graph. The mean function then satisfies the smoothness condition with $\beta=2$. Subsequently, the data is generated as $Y_{i}(t)\sim\N(f_{i}(t),1)$. We generate 50 replicated datasets for each case.

{\underline{Computational algorithm:}} 
We fit the model with prior : $\sigma^{-2}\sim$ Gamma$(0.1,0.1)$. 
Rest of the priors are as in Theorem~\ref{posterior known smoothness}.
We implement an efficient Markov chain Monte Carlo (MCMC) sampling scheme for posterior computation.
As described in Theorem\ref{discrete}, our working model $f_{i}(k/T)$ is $f_i(k/T)=\sum_{m=1}^n \sum_{j=1}^T \vartheta_{mj} e_{mi}\psi_j(k/T)$.
The error variance $\sigma^2$, and the coefficients $\vartheta_{mj}$'s are sampled using a Gibbs sampler from the full conditional conjugate posterior.
To update the scale parameter $c$, we implement a Hamiltonian Monte Carlo (HMC) sampler (\citet{neal2011mcmc}).
To select optimal $\alpha$, we consider the 5-fold cross-validation framework. To implement our method, we generate 5000 post-burn MCMC samples after burn-in 5000 samples.

We discard 50\% of data at random and train model in rest of the available data. Based on the estimated function from available data, we compute the mean at the missing locations and evaluate mean prediction MSE. In Table~\ref{result1}, we compare with the predicted values obtained  by random forest using the R package {\tt missForest} (\citet{stekhoven2012missforest}), fPCA (functional PCA) method of {\tt fdapace} (\citet{fdapaceR}) and by univariate imputation technique using the R package {\tt imputeTS} (\citet{moritz2017imputets}). The former fits a random forest on the observed part and then predicts the missing part. It reports an out-of-bag imputation error using bootstrap aggregation. The {\tt imputeTS} algorithm imputes using a spline based interpolation technique for each node independently. We use another method that uses singular value decomposition based {\tt missMDA} (\citet{missMDAR}).  Most standard imputation packages failed to produce any result. Examples include bootstrap based {\tt Amelia} (\citet{ameliaR}), Expectation-Maximization (EM) algorithm based {\tt mtsdi} (\citet{mtsdiR}), which are specially designed for spatio-temporal datasets and another EM based algorithm {\tt imputeR}. In the simulation and data application, we consider wavelet bases to construct the prior variance $\Omega$. We also compute empirical coverages of the proposed method based on equal tail 95\% credible intervals for the MCMC samples. Table~\ref{result2} contrasts empirical coverages with the coverages due to inflated credible intervals. We set the inflation factor to $0.8\log n$ following \citet{das2017bayesian}. Figure~\ref{box} compares the box plots across different methods based on the 50 simulated datasets. Spline estimates are omitted in this comparison due to their large magnitudes.

\begin{table}[ht]
	\centering
	\caption{Predictive mean square error}
	\begin{tabular}{rrrr}
		\hline
		& Weighted random graph & Erdos-Re\'yni & Cluster \\ 
		Proposed method & 1.26 & 1.20 & 1.14 \\ 
		Functional PCA & 1.43 & 1.39 & 1.34 \\ 
		Random forest & 2.12 & 1.74 & 1.79 \\ 
		PCA ({\tt missMDA}) & 2.49 & 1.73 & 1.97 \\ 
		Univariate spline imputation & 4.78 & 4.83 & 4.77 \\ 
		
		\hline
	\end{tabular}
	\label{result1}
\end{table} 

\begin{table}[ht]
	\centering
	\caption{Coverage of the proposed method for different choices of the underlying graph. Equal tail 95\% Credible Intervals (CIs) are evaluated directly from the MCMC samples. They are then inflated by a factor of $0.8\log n$ while computing inflated CIs.}
	\begin{tabular}{rrrr}
		\hline
		& Random graph & Erdos-Re\'yni & Cluster \\ 
		\hline
		Equal tail 95\% CI & 0.81 & 0.70 & 0.85 \\
		Inflated 95\% CI &0.96&0.90&0.96\\
		\hline
	\end{tabular}
	\label{result2}
\end{table} 

\begin{figure}[htbp]
	\centering
	\subfigure{\label{fig:a}\includegraphics[width = 0.3\textwidth]{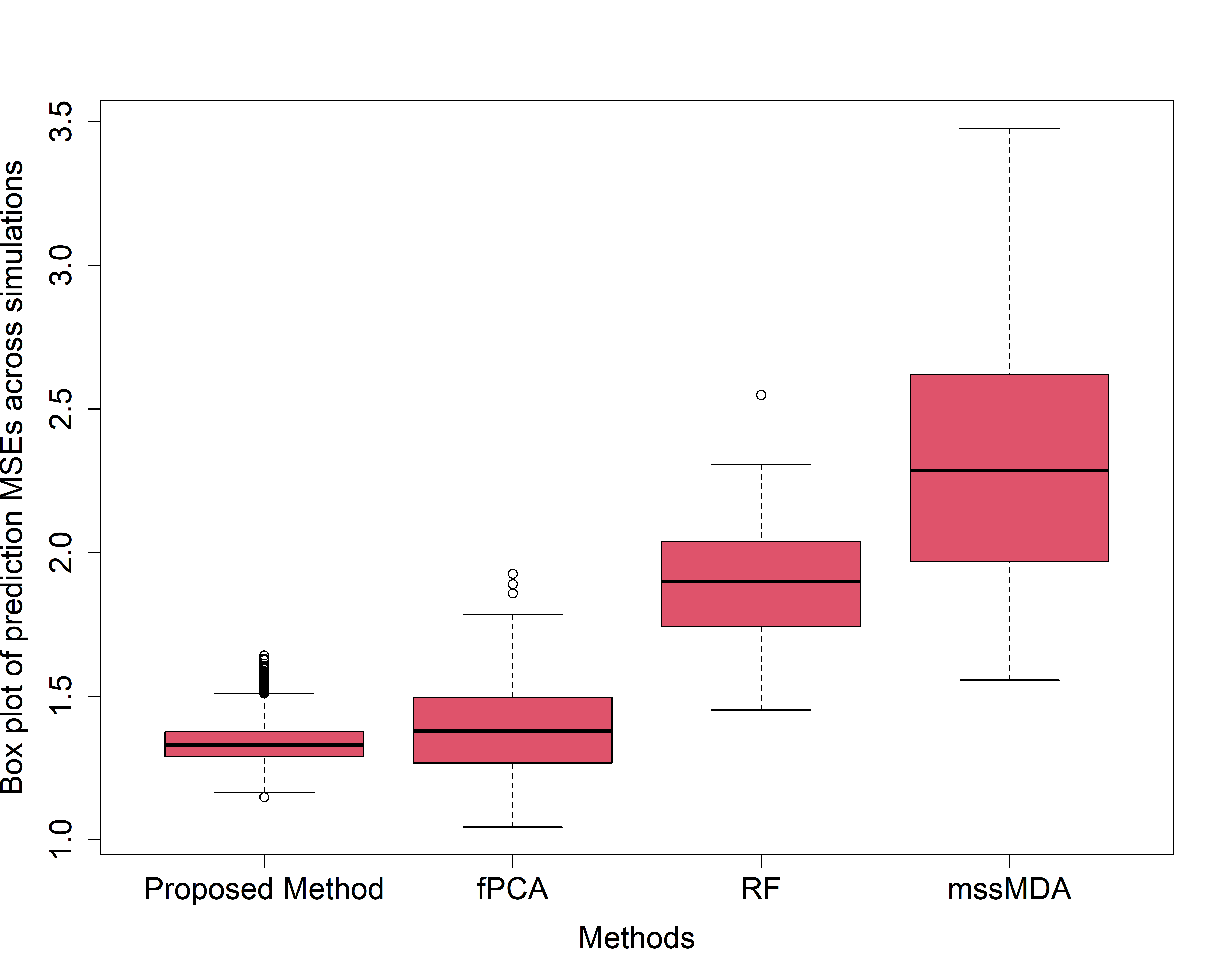}}
	\subfigure{\label{fig:a}\includegraphics[width = 0.3\textwidth]{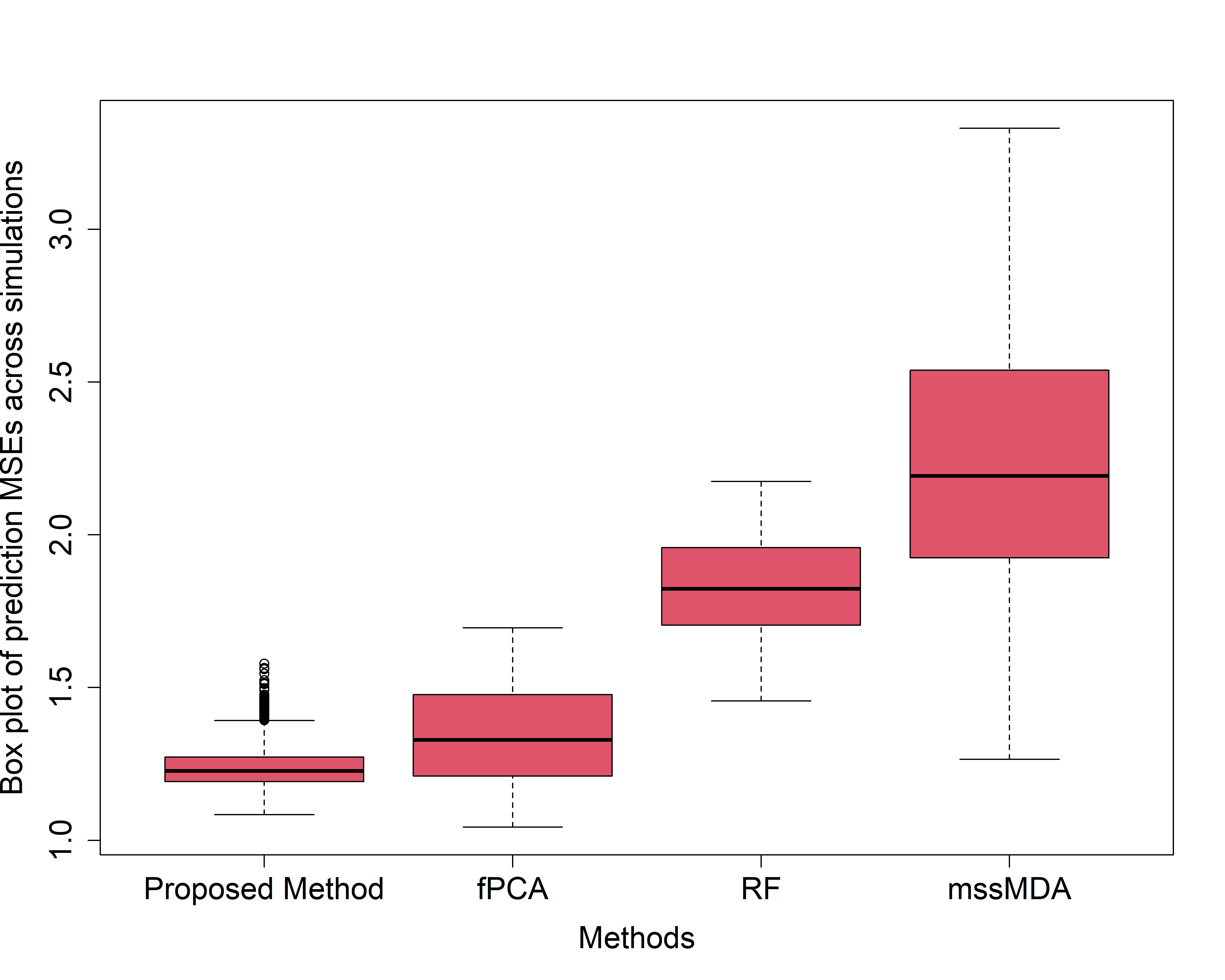}}
	\subfigure{\label{fig:a}\includegraphics[width = 0.3\textwidth]{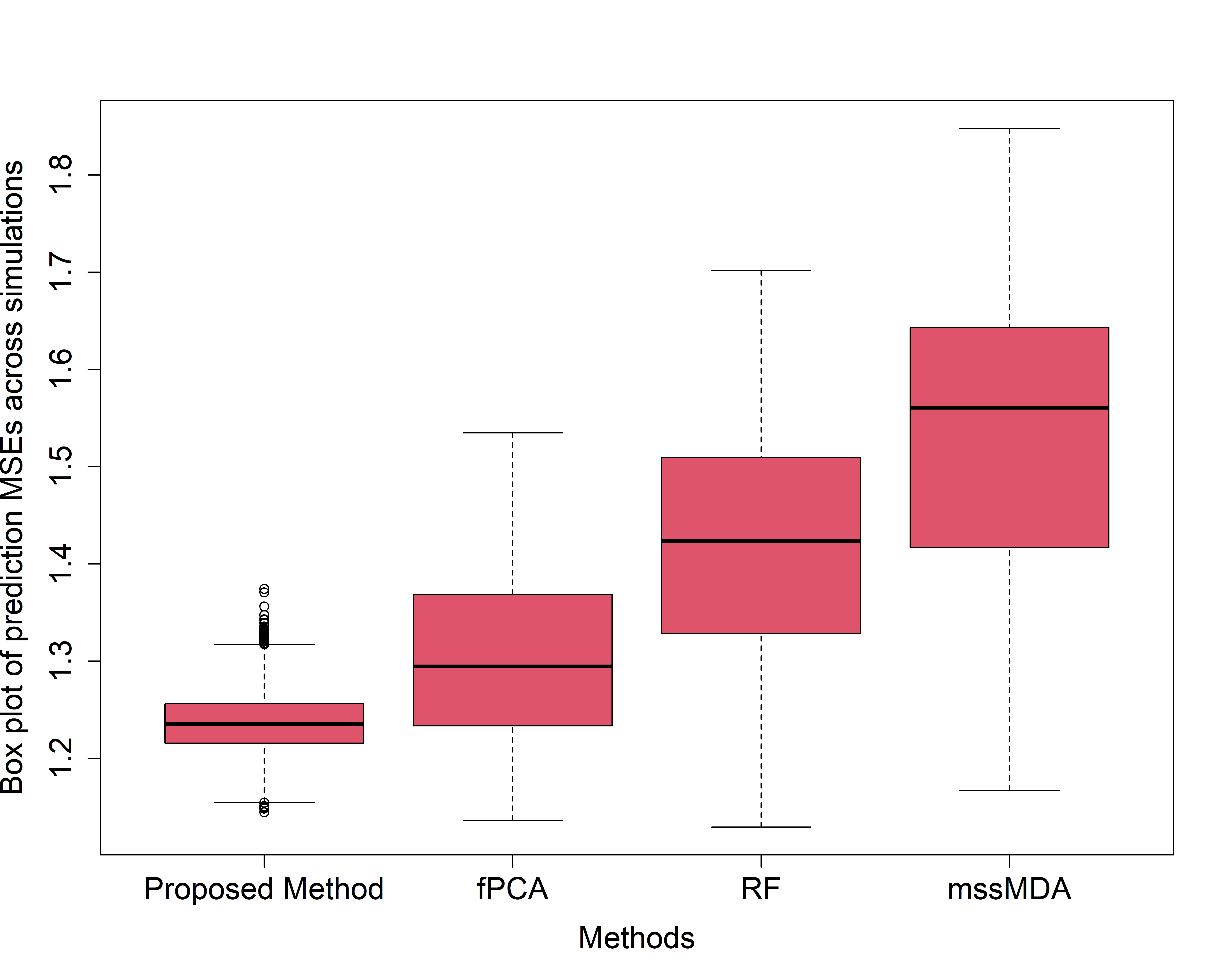}}
	\caption{Boxplots of prediction MSEs for different methods and three choices of graph structures (a) Random weighted graph;  (b) Erdos-Re\'yni  random graph; (c) Cluster graph.}
	\label{box}
\end{figure}

Now we illustrate the proposed method on a daily temperature dataset, collected by different weather stations across the US state of North Carolina. The dataset is downloaded from the website of the National Oceanic and Atmospheric Administration (NOAA). The values of daily average temperatures were collected at 158 weather locations over the year 2010. The dataset also contain the latitudes and longitudes of the weather stations. We construct weighted adjacency matrix ($A$) based on distances, as $A_{i,j}=10/d_{i,j}$. Here $d_{i,j}$ stands for the distance between $i$-$th$ and $j$-$th$ weather stations. In addition, we delete the edges having large $d_{i,j}$ to ensure sparsity. It is prudent to assume that the temperature reading of location (a) will have negligible impact on location (b) if there are geographically far apart.
In this paper, we choose $70$-$th$ percentile of $d_{i,j}$'s as threshold.
Thus, the resulting distanced based adjacency matrix in our case have $30\%$ non-zero weights after discarding the edges with large $d_{i,j}$. 
While studying evolution of mumps in England, \citet{knight2016modelling} also built a network structure among county towns based following a similar strategy.
Figure~\ref{real} shows that the constructed binary adjacency matrix satisfies the geometry condition with dimension value $r=1.96$. We again randomly  set aside $50\%$ of the observations for test data and train the model in rest. Based on the estimates, we predict at the missing locations and time points. Furthermore, we repeat this experiment using the data on four different months of the year. These four months are from four different seasons. We first mean-center and normalize the data. The comparisons here are limited to fPCA and PCA using {\tt missMDA}. Other two competing methods from the simulation section could not provide estimates.

\begin{table}[ht]
	\centering
	\begin{tabular}{rrrrr}
		\hline
		& Jan & April & July & Oct \\ 
		\hline
		Proposed method & 0.0008 & 0.0006 & 0.0005 & 0.0007 \\ 
		Functional PCA & 0.0070 & 0.0060 & 0.0070 & 0.0080 \\ 
		PCA ({\tt missMDA}) & 0.3040 & 0.3300 & 0.3600 & 0.1200 \\ 
		\hline
	\end{tabular}
\end{table}

\begin{figure}[h] 
	\centering
	\subfigure{\includegraphics[width=5cm]{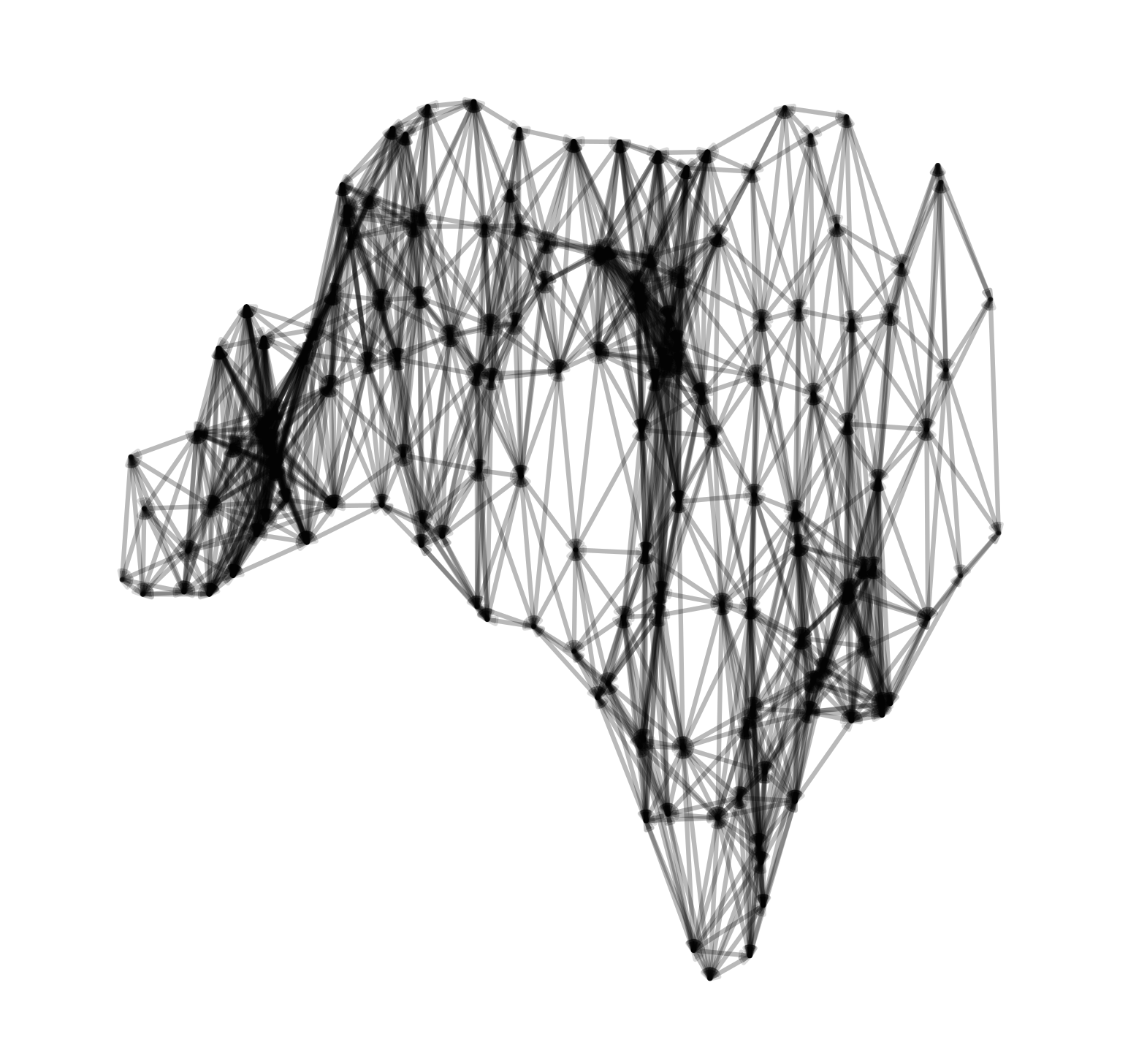}} \hspace{2cm}
	\subfigure{\includegraphics[width = 5cm]{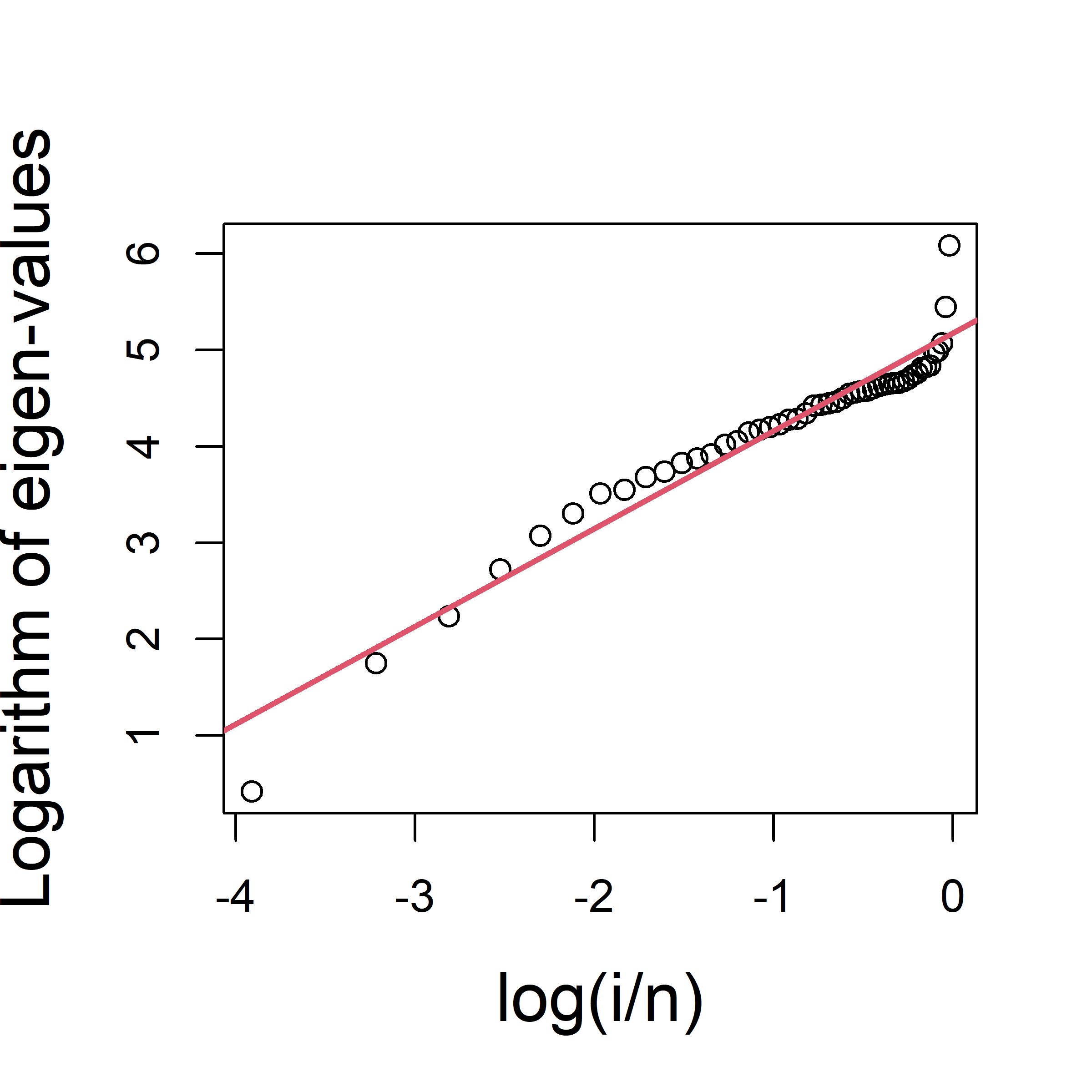}}
	\caption{ (a) The graph structure of the weather stations network;  (b) Logarithm of the $i$th eigenvalues of Laplacian plotted against $\log(i/n)$ to determine the graphical dimension.}
	\label{real}
\end{figure}

\section{Discussion}
The statistical methods to analyze functional data have observed many new developments in the recent past (see \citet{dabo2008functional,ferraty2011recent, goia2016special, aneiros2017functional, aneiros2019editorial}). 
Functional PCA based approaches have been incredibly successful to model functional datasets in dense grid \citet{hall2006properties}. In Bayesian framework, clustering based approaches are proposed for functional clustering \citep{petrone2009hybrid, rodriguez2009bayesian}, and also in image regression \citep{meyer2015bayesian, goldsmith2014smooth}. \cite{wang2016functional} provides a thorough review on recent advancements on analyzing functional data.

In this work, we develop a method to study functional data on a given network. 
To the best of our knowledge, this is the first attempt on a network-linked FDA. 
The main novelties, apart from considering an FDA setup, is that we address the two-way smoothness issue, for studying minimax rate and Bayesian adaptation, a pioneering study, and we establish coverage in the graphical setting, which has not been done earlier even for scalar observations. 
There are some immediate extensions, we may consider.
One extension could be modeling a multivariate functional dataset taking a two-stage approach. The graphical dependence may be computed in the first stage. The subsequent stage may apply our proposed functional data model using the estimated network from the first stage.
Another extension may be to consider modeling time-varying networks.

\section*{Acknowledgement}
We would like to thank the Editors of the journal, the editors of special issue, and reviewers for their constructive comments that improved presentation of the paper. The second author is partially funded by the ARO grant 76643-MA.

\section*{Appendix: Proofs}
\label{sec:proofs}

\begin{proof}[Proof of Theorem~\ref{minimax}]
	We consider the problem reduced in the canonical form in terms of observation $Z_{ij}$ and parameter $\vartheta_{ij}$. We identify $\vartheta$ with $f$ and say $\vartheta\in \mathcal{H}^{\beta, \gamma}(Q)$ if $ \sum_{i=1}^{n} \sum_{j=1}^\infty i^{2\beta/r}j^{2\gamma} \vartheta_{ij}^2\le nQ^2$. Also $\|f\|_n^2$ reduces to  $n^{-1} \sum_{i=1}^{n}\sum_{j=1}^\infty \vartheta_{ij}^2$. 
	
	We follow Pinsker's approach as described in \citet{Tmin} suitably adapted to double arrays. Consider a linear estimator $l(Z):=(\!(l_{ij} Z_{ij}\!)$ of $\vartheta:=(\!(\vartheta_{ij})\!)$ and compute its (normalized) risk $R(l,\vartheta)=n^{-1}\sum_{i=1}^{n}\sum_{j=1}^{\infty}\{ (1-l_{ij})^2\vartheta_{ij}^2+  l_{ij}^2\}$. 
	Define Pinsker's estimator by taking the coefficients to be $l_{ij}^*=(1-\delta b_{ij})_{+}$, where $b_{ij}=i^{\beta/r} j^\gamma$ and $\delta$ is the solution of 
	$\delta^{-1} {n^{-1}}\sum_{i=1}^{n}\sum_{j=1}^{\infty}b_{ij}(1-\delta b_{ij})=Q^2$. 
	As $b_{ij}\geq 0$, is increasing in $i$ and $j$, and $b_{ij}\rightarrow \infty$ if either $i\to \infty$ or $j\to \infty$, following the argument given in 
	Lemma~3.1 of \citet{Tmin}, we conclude that $\delta$ is unique and is given by $\delta = {n^{-1}\sum_{(i,j)\in \mathcal{D}}b_{ij}}/\{Q^2+n^{-1}\sum_{(i,j)\in \mathcal{D}}b_{ij}^2\}$, where 
	$$\mathcal{D}=\big\{(i, j): b_{ij} \le  1/\delta \big\}=\big\{(i, j): 1\leq j\leq\delta^{-1/\gamma}i^{-\beta/r\gamma}, 1\leq i\leq\delta^{-r/\beta}\big\}$$ 
	has cardinality $\#\mathcal{D}\le\sum_{i=1}^{\delta^{-r/\beta}}\delta^{-1/\gamma}i^{-\beta/r\gamma}\lesssim \delta^{-1/\gamma-r/\beta}$. Now 
	$$
	nQ^2=\delta^{-1} \sum_{i=1}^{n}\sum_{j=1}^{\infty}b_{ij}(1-\delta b_{ij})_{+} \asymp \delta^{-1}\sum_{(i,j)\in \mathcal{D}} i^{\beta/r}j^{\gamma} 
	\asymp  \delta^{-2}\#\mathcal{D}\leq \delta^{-2-1/\gamma-r/\beta}
	$$			
	since $i^{\beta/r}j^{\gamma}\le 1/{\delta}$ for $(i,j)\in \mathcal{D}$, giving that  $\delta\asymp n^{-\gamma\beta/(2\gamma\beta+\beta+r\gamma)}$. 
	
	We estimate the upper bound of the minimax risk by the risk  of Pinsker's estimator  $l^*(Z)$ given by  
	\begin{align*}
	R(l^*(Z),\vartheta) 
	& =  n^{-1} \big[\delta^2 \sum_{(i,j)\in \mathcal{D}} b_{ij}^2 \vartheta_{ij}^2 + \sum_{(i,j)\in \mathcal{D}^c} \vartheta_{ij}^2+ \sum_{(i,j)\in \mathcal{D}} (1-\delta b_{ij})^2\big].
	\end{align*}
	The first term is bounded by $\delta^2 Q^2$. By the definitions of $\mathcal{H}^{\beta,\gamma}(Q)$ and $\mathcal{D}$, the second by $n^{-1} \max \{ i^{-2\beta/r}j^{-2\gamma}: (i,j)\in {\mathcal{D}}\} \sum_{i=1}^{n} \sum_{j=1}^\infty  i^{2\beta/r} j^{2\gamma} \vartheta_{ij}^2\le \delta^2 Q^2$. The last term is $n^{-1} \sum_{(i,j)\in \mathcal{D}} (1-\delta b_{ij})_+^2\le n^{-1} \# \mathcal{D}\le n^{-1} \delta^{-1/\gamma-r/\beta}\asymp \delta^2 \asymp n^{-2\beta\gamma/(2\beta\gamma+\beta+r\gamma)}$.  Thus the upper bound follows.

	We lower bound the minimax risk by a Bayes risk. Let  $\vartheta_{\mathcal{D}}=(\vartheta_{ij}: (i,j)\in {\mathcal{D}})$ and $\Theta_{\mathcal{D}}=\{\vartheta_{\mathcal{D}}: \sum_{(i,j)\in {\mathcal{D}}}b_{ij}^2\vartheta_{ij}^2\leq nQ^2\}$. Let a prior  $\Pi$ for  $\vartheta_{\mathcal{D}}$ be $\vartheta_{ij}\sim \mathrm{N}(0, (1-\delta)v_{ij}^2)$ independently, where $v_{ij}^2=(1-\delta b_{ij})_+/(\delta b_{ij})$. By conjugacy
	$\vartheta_{ij}|Z_{ij}\sim \N ((1-\delta) Z_{ij}/(1-\delta+ v_{ij}^{-2}), (1-\delta)/(1-\delta+ v_{ij}^{-2})))$. 
	Then a lower bound to the minimax risk is given by 
	$M^*-m^*$, where $M^*$ is the minimal Bayes risk with respect to the prior $\Pi$ above   
	and $m^*$ is the maximum Bayes risk with respect to $\Pi$ for an estimator lying in $\Theta_D$.   Thus  it follows by simple calculations that 
	$$M^*=\frac{\delta (1-\delta)}{n} \sum_{(i,j) \in {\mathcal{D}}} \frac{b_{ij}(1-\delta b_{ij})}{1-\delta+\delta^2 b_{ij}}+\frac{1-\delta}{n} \sum_{(i,j) \in {\mathcal{D}}} (1-\delta b_{ij}) \gtrsim \frac{\delta}{n}\sum_{(i,j) \in {\mathcal{D}}}  b_{ij} +\frac{\# {\mathcal{D}}}{n}, $$ 
	which is of the order $n^{-1}\delta^{-1/\gamma-r/\beta}\asymp \delta^2$, 
	the same order as the upper bound, since $\delta\asymp n^{-\beta\gamma/(2\beta\gamma+\beta+r\gamma)}\to 0$. It then remains to verify that $m^*$ is negligible compared to $M^*$. 
	
	The maximum Euclidean norm $\|\vartheta_{\mathcal{D}}\|$ of $\vartheta_{\mathcal{D}}\in  \Theta_{\mathcal{D}}$ is clearly at most a multiple of $\sqrt{n}Q$.  Then by following the arguments in  pages 150--154 of \citet{Tmin}, we get 
	$$m^*\leq 2 [\sup\{\|\vartheta_{\mathcal{D}}\|^2: \vartheta_{\mathcal{D}}\in\Theta_{\mathcal{D}}\}  \Pi(\Theta_{\mathcal{D}}^c)+\{\Pi(\Theta_{\mathcal{D}}^c)
	\E_{\Pi} \|\vartheta_{\mathcal{D}}\|^4 \}^{1/2}] \lesssim    n \{\Pi(\Theta_{\mathcal{D}}^c)\}^{1/2}$$ 
	using a bound for the fourth moment of a centered normal variable. 
	It then suffices to show that $\Pi(\Theta_{\mathcal{D}}^c) $ is exponentially small in $n$, since $M^*$ decays as a power of $n$. Write  $\Pi(\Theta_{\mathcal{D}}^c)=\P(\sum_{(i,j)\in \mathcal{D}}b_{ij}^2 v_{ij}^2 \eta_{ij}^2>u)$, where $u=nQ^2/(1-\delta)\asymp n$ and $\eta_{ij}$ are independent standard normal variables. 
	By Markov's inequality, 
	\begin{align} 
	\P\big(\sum_{(i,j)\in \mathcal{D}}b_{ij}^2 v_{ij}^2 \eta_{ij}^2>u\big)\leq e^{-qu}\prod_{(i,j)\in \mathcal{D}}\E\{\exp(q b_{ij}^2 v_{ij}^2\eta_{ij}^2 )\}=e^{-qu}\prod_{(i,j)\in \mathcal{D}}(1-2q b_{ij}^2 v_{ij}^2)^{-1/2} .
	\label{neweqn}
	\end{align}
	Using the facts that $(1-2x)^{-1/2}\leq e^{2x}$ for $x < 1/4$, $\max \{ qb_{ij}^2 v_{ij}^2: (i,j)\in \mathcal{D} \}\le  q/\delta^2<1/4$ for $q<\delta^2/4$, we can bound the right hand side of \eqref{neweqn} by 
	$\exp\{-qu+4q\sum_{(i,j)\in \mathcal{D}} b_{ij}^2 v_{ij}^2 \} $. 	
	Now, $q\sum_{(i,j)\in \mathcal{D}} b_{ij}^2 v_{ij}^2 \leq  q\delta^{-1} \sum_{(i,j)\in \mathcal{D}} b_{ij} \le q\delta^{-2} \# \mathcal{D} \asymp qn \lesssim n\delta^2$. Hence  for large enough $u$, which we can have when $Q$ is large, it follows that the bound in \eqref{neweqn} decays exponentially in $n\delta^2\to \infty$.  
\end{proof}

\begin{proof}[Proof of Lemma~\ref{RKHS}]
	We first lower bound the small ball probability $\Pi (\|f\|_n^2 \le \epsilon^2|c) $ at the origin. We can express the function as 
	$	f=\sum_{i=1}^{n} \sum_{j=1}^\infty \kappa_j^{1/2} j^{-(\gamma+1/2)} (c/n)^{(2\alpha+r)/2r}(\lambda_i+n^{-2})^{-(\alpha+r/2)/2} W_{ij} \psi_j e_i$, where $W_{ij}$ are independent standard normal variables.
	We split the sum above in four regions: $(i< i_0, j\le j_1)$, $(i< i_0, j>j_1)$, $(i\ge i_0,j\le j_2)$ and $(i\ge i_0, j\le j_2)$, where $j_1,j_2$ depend on $n$ and will be specified later. 
	From \eqref{Laplacian eigenvalues}, we have the estimates $(\lambda_i+n^{-2})^{-(\alpha+r/2)}\le C_3 n^{-(2\alpha+r)}$ for all $i$ and an improved estimate that gives $(\lambda_i+n^{-2})^{-(\alpha+r/2)}\le C_4 (i/n)^{-(2\alpha+r)/r}$ for $i\ge i_0$, where $C_3=(C_1+1)^{-(\alpha+r/2)}$ and $C_4=C_1^{-(\alpha+r/2)}$. 
	Then $\|f\|_n^2=n^{-1} \sum_{i=1}^{n} \sum_{j=1}^\infty j^{-(2\gamma+1)} (c/n)^{(2\alpha+r)/r}(\lambda_i+n^{-2})^{-(\alpha+r/2)} W_{ij}^2$ can be split in sums over these regions and it suffices to lower bound the probabilities 
	\begin{itemize}
		\item [(i)] $\P (C_3 \sum_{i< i_0} \sum_{j\le j_1}  j^{-(2\gamma+1)} (c/n)^{(2\alpha+r)/r}n^{-(2\alpha+r)} W_{ij}^2 \le n\epsilon^2/4)$; 
		\item [(ii)] $\P (C_3 \sum_{i< i_0} \sum_{j> j_1}  j^{-(2\gamma+1)} (c/n)^{(2\alpha+r)/r}n^{-(2\alpha+r)} W_{ij}^2 \le n\epsilon^2/4)$; 
		\item [(iii)] $\P (C_4 \sum_{i\ge  i_0} \sum_{j\le j_2}  j^{-(2\gamma+1)} (c/n)^{(2\alpha+r)/r} (i/n)^{-(2\alpha+r)/r} W_{ij}^2 \le n\epsilon^2/4)$; 
		\item [(iv)] $\P (C_4 \sum_{i\ge  i_0} \sum_{j> j_2}  j^{-(2\gamma+1)} (c/n)^{(2\alpha+r)/r} (i/n)^{-(2\alpha+r)/r} W_{ij}^2 \le n\epsilon^2/4)$.
	\end{itemize}
	We shall show that (ii) and (iv) are greater than $1/2$ for appropriate choices of $j_1,j_2$ and then estimate (i) and (iii) for those choices. 
	
	Note that the expected value of the expression in (ii) is $C_3 c^{(2\alpha+r)/r}  i_0 \sum_{j> j_1}  j^{-(2\gamma+1)}\le C_5 c^{(2\alpha+r)/r} (j_1+1)^{-2\gamma}$, where $C_5=C_3 i_0/(2\gamma)$. Hence by Markov's inequality, (ii) is at least $1/2$ if $j_1$ is chosen to be the integer part of $C_6 c^{(2\alpha+r)/(2r\gamma)}(n\epsilon^2)^{-1/(2\gamma)}$, where $C_6=(8C_5)^{1/(2\gamma)}$. 
	
	Now we can lower bound the probability in (i) by $[\P (\sum_{j\le j_1}  j^{-(2\gamma+1)}  W_{1j}^2 \le \delta^2)]^{i_0}$, where $\delta^2= c^{-(2\alpha+r)/r}n\epsilon^2/(4C_3 i_0)$. From Lemma~6.2 of \citet{belitser2003adaptive}, $$\P (\sum_{j\le j_1}  j^{-(2\gamma+1)}  W_{1j}^2 \le \delta^2)\ge a_1^{-j_1}  \P (\sum_{j\le j_1} W_{1j}^2 \le 2 \delta^2 j_1^{2\gamma+1} )\ge a_1^{-j_1} [\P ( W_{11}^2 \le 2 \delta^2 j_1^{2\gamma})]^{j_1},$$ 
	where $a_1=\sqrt{2} e^{2\gamma+1}$. Plugging in the value of $j_1$, it follows that $2 \delta^2 j_1^{2\gamma}=4C_5/i_0$, and hence the estimate reduces to $a_2^{-j_1}$ for another constant $a_2>0$. 
	
	The expected value of the sum in (iv) is $C_4 \sum_{i\ge i_0} \sum_{j> j_2}  j^{-(2\gamma+1)} c^{(2\alpha+r)/r} i^{-(2\alpha+r)/r}\le C_7 c^{(2\alpha+r)/r} (j_2+1)^{-2\gamma}$, where $C_7=C_4 i_0^{-2\alpha/r}/(2\gamma)$. Thus the probability in (iv) is at least $1/2$ if $j_2$ is chosen to be the integer part of $C_8 c^{(2\alpha+r)/(2r\gamma)}(n\epsilon^2)^{-1/(2\gamma)}$, where $C_8=(4 C_7)^{1/(4 \alpha\gamma)}$. 
	The probability in (iii) is
	$\P ( \sum_{i> i_0} \sum_{j\le j_2} i^{-(2\alpha+r)/r} W_{ij}^2 \le \tau^2  j_2^{2\gamma+1}) \ge [\P ( \sum_{i> i_0} i^{-(2\alpha+r)/r} W_{i1}^2 \le \bar \delta^2  j_2^{2\gamma}) )]^{j_2}$, 
	where $ \bar\delta^2 = (4C_4)^{-1} c^{-(2\alpha+r)/r} n\epsilon^2$. With $j_2$ chosen as above, $\bar \delta^2  j_2^{2\gamma}$ is a fixed constant $C_9$. By Corollary~4.3 of \citet{dunker1998small}, $\P ( \sum_{i\ge i_0} i^{-(2\alpha+r)/r} W_{i1}^2 \le C_9)\ge \exp(-C_{10} C_9^{-r/\alpha})=a_3$, where $C_{10}$ and $a_3$ are positive constants.  Hence the probability in (iii) is lower bounded by $a_3^{j_2}$.  
	
	Now combining all estimates, it follows that $\Pi (\|f\|_n^2 \le \epsilon^2|c) \ge a_2^{-j_1} a_3^{j_2}/4$, or that 
	\begin{equation}
	\label{log small ball}
	-\log \Pi (\|f\|_n^2 \le \epsilon^2|c)\le j_1 \log a_2+ j_2 \log a_3+\log 4\lesssim c^{(2\alpha+r)/(2r\gamma)} (n\epsilon^2)^{-1/(2\gamma)}.
	\end{equation}

	We now show the result on decentering function at an $f^*\in \mathcal{H}^{\beta,\gamma}(Q)$ for some $Q>0$. We can represent $f^*=\sum_{i=1}^{n} \sum_{j=1}^\infty \kappa_j^{1/2} \vartheta_{ij} \psi_j e_i$  with $n^{-1}\sum_{i=1}^{n} \sum_{j=1}^\infty j^{2\gamma} (1+n^{2\beta/r} \lambda_i^\beta)\vartheta_{ij}^2\le Q^2$. Let $h=\sum_{i=1}^{I} \sum_{j=1}^J\kappa_j^{1/2} \vartheta_{ij} \psi_j e_i$, with $I$ and $J$ to be chosen below. Then the residual squared-norm $\|h-f_0\|_n^2$ is bounded by $\sum_{i,j: \,i>  I} \vartheta_{ij}^2 +\sum_{i,j:\, j >  J} \vartheta_{ij}^2$. Because $\lambda_i \le C_2 (i/n)^{2/r}$ for all $i\ge i_0$ and $\lambda_i$ are increasing, this can be further bounded by 
	$$ I^{-2\beta/r} n^{-1} \sum_{i=1}^n \sum_{j=1}^\infty j^{2\gamma} (1+n^{2\beta/r} \lambda_i^\beta)\vartheta_{ij}^2+J^{-2\gamma} n^{-1} \sum_{i=1}^n \sum_{j=1}^\infty j^{2\gamma} (1+n^{2\beta/r} \lambda_i^\beta)\vartheta_{ij}^2\lesssim I^{-2\beta/r}+  J^{-2\gamma}.$$
	In order to bound this by $\epsilon^2$, we choose $I\asymp \epsilon^{-r/\beta}$ and $J\asymp \epsilon^{-1/\gamma}$. The smoothness condition implies that for $i_0\le I\le  n$, we have $\sum_{i=1}^{I}\sum_{j=1}^J (1+i^{2\beta/r}) j^{2\gamma}\vartheta_{ij}^2\leq nQ^2$. Since the eigenvalues of the covariance kernel of $\mathrm{GP}(0,(c/n)^{(2\alpha+r)/r} (\mathrm{L}+ n^{-2}\mathrm{ I})^{-(\alpha+r/2)}\otimes \Omega)$ corresponding to the eigenfunctions $\psi_j e_i$ are $(c/n)^{(2\alpha+r)/2r}(\lambda_i+n^{-2})^{-(\alpha +r/2)/2}  j^{-(\gamma+1/2)}$, it follows that the squared RKHS norm of $h\in \KK^n$ given by $\|h\|_{\KK,c,n}^2=\sum_{i\le I} \sum_{j \le J}  (n/c)^{(2\alpha+r)/r} j^{2\gamma+1} (\lambda_i+n^{-2} )^{\alpha +r/2} \vartheta_{ij}^2$ is bounded by a constant multiple of 
	\begin{align*}
	c^{-(2\alpha+r)/r}\sum_{i\leq I}\sum_{j \leq J}j i^{(2\alpha+r-2\beta)/r} (j^{2\gamma}i^{2\beta/r})\vartheta_{ij}^2 \lesssim nc^{-(2\alpha+r)/r}JI^{(2\alpha+r-2\beta)/r} Q^2.
	\end{align*}
	By the choice of $I$ and $J$, this is bounded by a constant multiple of   	 $nc^{-(2\alpha+r)/r}\epsilon^{-(({2\alpha+r-2\beta})/{\beta})-{1}/{\gamma}}$.
	
	Combining with the estimate of the small ball probability at the origin obtained above, the assertion on the prior probability of concentration at the true function follows (see, e.g., Proposition~11.19 of \citet{Ghosal}). 
\end{proof}

\begin{proof}[Proof of Theorem~\ref{posterior known smoothness}]
	We intend to show that the posterior contracts at the rate $\epsilon_n=n^{-\beta \gamma /(2\beta \gamma+\beta +r\gamma)}$. 	
	
	To apply the general theory for posterior contraction (\citet{Ghosal}) with respect to the norm $\|\cdot\|_{n}$, we first establish, for a pair of functions $f^*$ and $f^\dagger$ with  $\|f^\dagger-f^*\|_{n}>\epsilon$, the existence of tests for $f=f^*$ against $\{ f: \|f-f^\dagger\|_{n}<\epsilon/4\}$ with error probabilities bounded by $e^{- n\epsilon^2/32}$; see (8.17) of \citet{Ghosal}. We can show that the likelihood ratio test for testing $f=f_0$ against $f=f_1$ satisfies the requirement. The proof proceeds by applying Lemma~D.16 of \citet{Ghosal} on the canonical model in terms of the independent variables $Z_{ij}\sim \N(\vartheta_{ij},1)$. Whenever $n^{-1}\sum_{i=1}^n \sum_{j=1}^\infty  |\vartheta_{ij}^*-\vartheta_{ij}^\dagger|^2\ge \epsilon^2$, there exists a test $\phi_n$ for testing $\vartheta_{ij}=\vartheta_{ij}^*$ for all $i,j$ against $n^{-1} \sum_{i=1}^n \sum_{j=1}^\infty  |\vartheta_{ij}-\vartheta_{ij}^\dagger|^2\le \epsilon^2/16$ with both type of error probabilities uniformly bounded by $e^{-n\epsilon^2/32}$. In terms of the original function $f$, this translates to a test sequence for $f=f^*$ against $\{ f: \|f-f^\dagger\|_{n}<\epsilon/4\}$ with both type of error probabilities uniformly bounded by $e^{-n\epsilon^2/32}$, as required. 
	
	Next, we need to verify a prior concentration property in Kullback-Leibler neighborhoods given by (8.19) of \citet{Ghosal}. By considering the equivalent model, we can apply Lemma~L.4 of \citet{Ghosal} that the required Kullback-Leibler neighborhood is equivalent with $\|\cdot\|_{n}$-neighborhood, so it suffices to show the estimate that $-\log \Pi (f: \|f-f^*\|_{n}\le \epsilon_n)\lesssim n\epsilon_n^2=n^{(\beta+r\gamma)/(2\beta \gamma+\beta +r\gamma)}$. Clearly, $$\Pi (f: \|f-f^*\|_{n}\le \epsilon_n)
	\ge (e^{-c_{1n}^a}-e^{-c_{2n}^a}) \inf \{\Pi (f: \|f-f^*\|_{n}\le \epsilon_n|c): {c_{1n}}\le c\le {c_{2n}} \},$$ where we choose $c_{1n}=n^{2\gamma r (\beta-\alpha)(2\alpha+r)(2\beta \gamma+\beta+r\gamma)}$ and $c_{2n}=n^{r(\beta+r\gamma)(1+2\gamma)/(2\alpha+r)(2\beta\gamma+\beta+r\gamma)}$. We observe that the interval is not vacuous, that is, $c_{1n}\le c_{2n}$. This is clearly true if $\beta\le \alpha$, while for $\beta\ge \alpha$, this can be seen to hold after some simplification using the assumed bound $2(\beta-\alpha)\le r$.  Note that $c_{2n}$ and $c_{1n}$ are respectively the solutions of $ c^{(2\alpha+r)/(2r\gamma)} (n\epsilon_n^2)^{-1/(2\gamma)}\le n\epsilon_n^2$ and 
	$c^{-(2\alpha+r)/r} \epsilon_n^{-(2\alpha \gamma +r \gamma+2\beta-2\beta\gamma)/(2\beta\gamma)}\le n\epsilon_n^2$, implying that  $-\log \Pi (f: \|f-f^*\|_{n}\le \epsilon_n|c)\lesssim n \epsilon_n^2$ for all $c_{1n}\le c \le c_{2n}$. Further, by our choice $c_{1n}\le n\epsilon_n^2$, so that the factor $e^{-c_{1n}^a}-e^{-c_{2n}^a}$ can be absorbed in the second factor leading to  $-\log \Pi (f: \|f-f^*\|_{n}\le \epsilon_n)\lesssim n\epsilon_n^2$, as required for the contraction rate $\epsilon_n$. 
	
	To complete the proof, it remains to construct a sieve $\mathcal{S}_n\subset \HH^n$ such that $\Pi (\mathcal{S}_n^c)\le e^{-A n \epsilon_n^2}$ with a sufficiently large $A>0$ and $\log N(\epsilon_n, \mathcal{S}_n, \|\cdot\|_n)\lesssim n\epsilon_n^2$, where $N$ stands for the covering number. 
	Let $ \HH^n_1$ stand for the unit ball in $\HH^n$ and $ \KK_{c,1}^n$. 
	By Borell's inequality, $\Pi (f \not \in \epsilon_n \HH^n_1+  C\sqrt{n}\epsilon_n \KK_{c,1}^n|c)\le e^{-A n \epsilon_n^2}/2$ if $C>0$ is chosen sufficiently large. If we choose $\mathcal{S}_n= \epsilon_n \HH^n_1+  C\sqrt{n}\epsilon_n \KK_{c_{2n},1}^n$, then we have $\Pi (\mathcal{S}_n^c)\le \Pi (c>c_{2n})+\int_0^{c_{2n}}  \Pi ( f \not \in \epsilon_n \HH^n_1+  C\sqrt{n}\epsilon_n \KK_{c,1}^n|c)   d\pi (c)\le e^{-c_{2n}^a}+ \frac12 e^{-A n \epsilon_n^2}$ because $\|\cdot\|_{\KK,c,n}$ is monotone decreasing in $c$, and hence the unit ball $\KK_{c,1}^n$ is monotone increasing in $c$. Thus  $\cup_{ c\le c_{2n}} \KK_{c,1}^n =\KK_{c_{2n},1}^n$ and the required condition on $\Pi (\mathcal{S}_n^c)$ is satisfied in view of the fact that $c_{2n}\gg n\epsilon_n^2$, which is a consequence of $\gamma\ge \alpha/ar$. 
	
	Finally, the metric entropy $\log N(2\epsilon_n, \mathcal{S}_n, \|\cdot\|_n)\le \log N (\epsilon_n, C\sqrt{n}\epsilon_n \KK_{c_{2n},1}^n , \|\cdot\|_n)$, and to compute the latter, we approximate any $f=\sum_{i=1}^{n}\sum_{j=1}^\infty \kappa_j^{1/2} j^{-(\gamma+1/2)} \vartheta_{ij} \psi_j e_i\in \KK^n$ by $f_{I,J}=\sum_{i=1}^{I}\sum_{j=1}^{J} \kappa_j^{1/2} j^{-(\gamma+1/2)} \vartheta_{ij} \psi_j e_i\in \KK^n$ with $I\asymp \epsilon_n^{-r/(2\beta)}$ and $J\asymp \epsilon_n^{-1/(2\gamma)}$. We observe from the proof of Lemma~\ref{RKHS} that the approximation error is bounded by $c_{2n}^{-(2\alpha+r)/r} \epsilon_n^{2-(2\alpha \gamma+r\gamma+2\beta)/(2\beta\gamma)}=n \epsilon_n^2$ for our choice of $c_{2n}$. Hence the computation of the $\epsilon_n$-metric entropy reduces to that of an  $IJ$-dimensional $C\epsilon_n$-ball with respect to the Euclidean metric. The latter equals, up to a constant multiple, the dimension $IJ\asymp \epsilon_n^{-r/(2\beta)-1/(2\gamma)}$, which is easily verified to be equal to  $n\epsilon_n^2$.  
	
	Piecing the estimates together and applying Theorem~8.23 of \citet{Ghosal}, the contraction rate $\epsilon_n=n^{-\beta \gamma /(2\beta \gamma+\beta +r\gamma)}$ follows. 
\end{proof}

\begin{proof}[Proof of Theorem~\ref{posterior adaptation}] In view of Section~8.3 of \citet{Ghosal} and the proof of Theorem~\ref{posterior known smoothness}, it suffices to show that, for some `pre-rate' $\bar \epsilon_n\le \epsilon_n$, the prior concentration condition $-\log \Pi (\|f-f^*\|_n \le \bar\epsilon_n)\lesssim n\bar\epsilon_n^2$ and construct a sieve $\mathcal{S}_n\subset \HH^n$ such that $\log N(\epsilon_n, \mathcal{S}_n, \|\cdot\|_n)\lesssim n\epsilon_n^2$ and $\Pi (\mathcal{S}_n^c)\lesssim e^{-A n\bar\epsilon_n^2}$ for some sufficiently large constant $A>0$. 
	
	We represent $f^*$ in terms of the orthonormal basis as $f^* =\sum_{i=1}^{n} \sum_{j=1}^\infty \kappa_j^{1/2} \vartheta_{ij} \psi_j e_j$. Since  $f^*\in  \mathcal{H}^{\beta,\gamma}(Q)$, we have that $n^{-1} \sum_{i=1}^{n}  j^{2\gamma}(\lambda_i +n^{-2})^2 \vartheta_{ij}^2\le Q^2$. 
	Define $f^*_{I_n,J_n}=\sum_{i=1}^{I_n} \sum_{j=1}^{J_n} \kappa_j^{1/2} \vartheta_{ij} \psi_j e_j$, where $I_n,J_n\to \infty$ are chosen so that 
	$$\|f^*-f^*_{I_n,J_n}\|_n^2 \le \sum_{(i,j): i> I_n\; \mathrm{or}\; j>J_n} \vartheta_{ij}^2 \le (I_n^{-2\beta/r}+J_n^{-2\gamma}) \sum_{i=1}^{n} \sum_{j=1}^\infty (\lambda_i+n^{-2})^\beta j^{2\gamma} \vartheta_{ij}^2\lesssim \bar\epsilon_n^2,$$
	which can be ensured by the choices $I_n\asymp \epsilon_n^{-r/\beta}$ and $J_n\asymp n^{-1/\gamma}$. As in the proof of Lemma~\ref{RKHS}, the estimation of the prior probability concentration reduces to that of the $\bar\epsilon_n$-ball probability of independent normal distribution in the $I_n J_n$-dimensional Euclidean space, and gives $-\log \Pi (\|f-f^*\|_n \le \bar\epsilon_n)\lesssim -\log \pi(I_n,J_n) + I_n J_n\log (1/\bar\epsilon_n)\lesssim I_n J_n \log n \lesssim \bar \epsilon_n^{-(r/\beta)-1/\gamma}\log n$.  
	
	Let $A'>0$, to be chosen later. Define the sieve $\mathcal{S}_n= \{f= \sum_{i=1}^{I} \sum_{j=1}^{J} \kappa_j^{-1/2} \eta_{ij} \psi_j e_j,\; IJ\le A'I_n J_n, \, \max |\eta_{ij}|\le \sqrt{n} \}$. 
	Then $\Pi (\mathcal{S}_n^c)$ is bounded by a constant times $$ \pi(IJ> A' I_n J_n)+ A' I_n J_n \P (|Z|>\sqrt{n})  \lesssim (I_n J_n)^2 e^{-a_2 A' I_n J_n (\log I_n+\log J_n)}+e^{\log I_n+\log J_n-n/2},$$ 	
	where $Z$ stands for a standard normal variable; here we have used the fact that the number of ways a product $IJ$ is obtained is at most $(IJ)^2$. By choosing $A'$ large enough, the probability can be bounded by $\exp\{-A \bar \epsilon_n^{-(r/\beta)-1/\gamma}\}$ for any given $A>0$, so the required condition on $\Pi(\mathcal{S}_n^c)$ is satisfied by setting $\epsilon_n^{-(r/\beta)-1/\gamma}=n \bar \epsilon_n^2$, that is, $\bar \epsilon_n=(n/\log n)^{-\beta \gamma /(2\beta \gamma +\beta +r\gamma)}$. 
	
	Finally, it remains to estimate the entropy of $\mathcal{S}_n$, which is a union of the centered $IJ$-dimensional cube of diameter $2\sqrt{n}$ for $IJ\le  A'I_n J_n$. By standard arguments, the covering number is estimated by $\sum_{(I,J): IJ\le A' I_n J_n} (2 \sqrt{n}/\epsilon_n)^{IJ}\lesssim  I_n^2 J_n^2 (2 \sqrt{n} /\epsilon_n)^{A' I_n J_n}$, and hence  
	$$\log N(\epsilon_n, \mathcal{S}_n, \|\cdot\|_n)\lesssim \log I_n+\log J_n+ I_n J_n \log (2 \sqrt{n} /\epsilon_n) \lesssim  \bar \epsilon_n^{-(r/\beta)-1/\gamma}\log n = n \bar \epsilon_n^2 \log n \le n\epsilon_n^2$$ for $\epsilon_n= (\log n)^{1/2}\bar \epsilon_n=n^{-\beta \gamma /(2\beta \gamma +\beta +r\gamma)}(\log n)^{1/2+\beta \gamma /(2\beta \gamma +\beta +r\gamma) }$, establishing the rate. 
\end{proof}

\begin{proof}[Proof of Theorem~\ref{coverage}]
	We work with the canonical form with observations $Z_{ij}\sim \mathrm{N} ( \vartheta_{ij},1)$ with parameter $\vartheta=(\!( \vartheta_{ij})\!)$. Observe that the squared norm $\|f\|_n^2$ reduces to $n^{-1}\sum_{i=1}^{n}\sum_{j=1}^\infty \vartheta_{ij}^2$ and the prior can be expressed as $\vartheta_{ij}\sim \N (0,d_{ij}^2)$ independently, where $d_{ij}^2= c^{2\beta/r}i^{-2\beta/r} j^{-(2\gamma+1)}$. Hence the posterior distribution is given by $\vartheta_{ij}|Z_{ij} \sim \N ( Z_{ij}/(1+d_{ij}^{-2}), 1/(1+d_{ij}^{-2})) $, and the Bayes estimator for $\vartheta_{ij}$ is $\hat \vartheta_{ij}= Z_{ij}/(1+d_{ij}^{-2})$.  Then a natural $(1-\tau)$-posterior credible ball for $\vartheta$ around $\hat \vartheta=(\!(\hat \vartheta_{ij})\!)$ is given by $\{\vartheta: \|\vartheta- \hat \vartheta\|^2\le q_\tau\}$, where $q_\tau$, $0<\tau<1$, is the posterior $(1-\tau)$-quantile of $\|\vartheta- \hat \vartheta\|^2$. It is to be observed that $q_\tau$ is deterministic. The assertion of the theorem then reduces to showing that, for $\vartheta^*$ standing for the true value of $\vartheta$, 
	\begin{itemize} 
		\item [(i)] 
		$q_\tau \asymp n^{1-2\beta\gamma/(2\beta \gamma+\beta+r \gamma)}$, 
		\item [(ii)] 
		for any $Q>0$, there exists $K>0$ such that $\P_{\vartheta^*} (\|\vartheta^*-\hat\vartheta\|^2>K n^{1-2\beta\gamma/(2\beta \gamma+\beta+r \gamma)} )\to 0$. 
	\end{itemize}
	To establish (i), let $U=\|\vartheta-\hat \vartheta\|^2$, whose posterior distribution is deterministic. By Chebyshev's inequality, it suffices to show that $\E (U) \asymp n^{1-2\beta\gamma/(2\beta \gamma+\beta+r \gamma)}$ and 
	$\mathrm{var} (U) \ll n^{2-4\beta\gamma/(2\beta \gamma+\beta+r \gamma)}$, because then $q_\tau\asymp \E(U)$. 
	As $U$ is a sum of squares of independent mean-zero normal variables, a relation between the fourth central moment and the variance of a normal distribution, reduces the assertions respectively to  
	\begin{itemize} 
		\item [(iii)] 
		$ \sum_{i=1}^{n} \sum_{j=1}^\infty (1+d_{ij}^{-2})^{-1}\asymp n^{1-2\beta\gamma/(2\beta \gamma+\beta+r \gamma)}$, 
		\item [(iv)] 
		$\sum_{i=1}^{n} \sum_{j=1}^\infty (1+d_{ij}^{-2})^{-2} \ll n^{2-4\beta\gamma/(2\beta \gamma+\beta+r \gamma)}$. 
	\end{itemize} 
	To see when $1$ or $d_{ij}^{-2}$ dominates in the sum, introduce $\mathcal{K}=\{ (i,j): d_{ij}^{-2}\le 1\}$ and note that 
	$$\#\mathcal{K}=\#\{(i,j): i\le c j^{-r(2\gamma+1)/(2\beta)}, j\le c^{2\beta/(r(2\gamma+1))} \} = c \sum_{j=1}^{c^{2\beta/(r(2\gamma+1))}} j^{-r(2\gamma+1)/(2\beta)}\asymp c$$ 
	because by the assumption $\gamma > (\beta/r)-1/2>0$, the power of $j$ is less than $-1$, making the series summable. To obtain an upper bound for the sum in (iii), we use the dominant term $1$ for $(i,j)\in \mathcal{K}$ and $d_{ij}^{-2}$ for $(i,j)\in \mathcal{K}^c$, giving   $ \sum_{i=1}^{n} \sum_{j=1}^\infty (1+d_{ij}^{-2})^{-1}\le  \# \mathcal{K}+ c^{2\beta/r} \sum_{(i,j)\in \mathcal{K}^c} i^{-2\beta/r} j^{-1-2\gamma}$. The first term is bounded by a multiple of $c$, while the second term can be split in the sum of  
	$c^{2\beta/r}\sum_{i\ge 1}  i^{-2\beta/r} \sum_{j>c^{2\beta/(r(2\gamma+1))}} j^{-(2\gamma+1)}$ and $c^{2\beta/r}\sum_{j \le c^{2\beta /(r(2\gamma+1))}} j^{-(2\gamma+1)}  \sum_{i> c j^{-(2\gamma+1)r/(2\beta)}}    i^{-2\beta/(r(2\gamma+1))}$.  
	Because $2\beta < r (2\gamma+1)$ by the assumption  $\gamma> (\beta/r)-1/2>0$, the former is estimated as $c^{2\beta/r} (c^{2\beta/(r(2\gamma+1))})^{-2\gamma}=c^{2\beta /(r(2\gamma+1))}\le c=n^{(\beta+r \gamma)/(2\beta\gamma+\beta+r\gamma)}$. The latter is estimated to be 
	$c^{2\beta/r}\sum_{j< c^{2\beta/(r(2\gamma+1))}} j^{-(2\gamma+1)} \sum_{i > c j^{-(2\gamma+1)r/(2\beta)}}  i^{-2\beta/r}$ which can be written as  $$c^{2\beta/r}\sum_{j< c^{2\beta/(r(2\gamma+1))}} j^{-(2\gamma+1)}  (c j^{-(2\gamma+1)r/(2\beta)})^{1-2\beta/r}=c \sum_{j< c^{2\beta/(r(2\gamma+1))}} j^{-(2\gamma+1)r/(2\beta)}.$$  
	This simplifies to $c (c^{2\beta/(r(2\gamma+1))})^{1-(2\gamma+1)r/(2\beta)}=c^{2\beta/(r(2\gamma+1))}\le c=n^{(\beta+r \gamma)/(2\beta\gamma+\beta+r\gamma)},$  
	again by the assumption $\gamma> (\beta/r)-1/2>0$. This proves the statement in (iii). It may be noted that the lower bound is within a factor one-half of the upper bound, because to lower bound, we use the sum in the denominator instead of the dominant term, which is at least half of the sum. Thus, (iii) holds.

	The estimate (iv), we follow previous decomposition again $\sum_{i=1}^{n} \sum_{j=1}^\infty (1+d_{ij}^{-2})^{-2}\le  \# \mathcal{K}+ c^{4\beta/r} \sum_{(i,j)\in \mathcal{K}^c} i^{-4\beta/r} j^{-2-4\gamma}$. We split the second part as $c^{4\beta/r}\sum_{i\ge 1}  i^{-4\beta/r} \sum_{j>c^{2\beta/(r(2\gamma+1))}} j^{-(4\gamma+2)}$ and $c^{4\beta/r}\sum_{j \le c^{2\beta /(r(2\gamma+1))}} j^{-(4\gamma+2)}  \sum_{i> c j^{-(2\gamma+1)r/(2\beta)}}    i^{-4\beta/(r(2\gamma+1))}$. Using the fact that $2\beta<r(2\gamma+1)$, we can show that each of the two terms are upper bounded by $c^{\frac{2\beta}{r(2\gamma+1)}}$. Thus, $\sum_{i=1}^{n} \sum_{j=1}^\infty (1+d_{ij}^{-2})^{-2}\lesssim c\ll n^{2-4\beta\gamma/(2\beta \gamma+\beta+r \gamma)}$.

	We now address (ii). By Chebyshev's inequality and a bias-variance decomposition, it suffices to show that $\|\E_{\vartheta^*}(\hat\vartheta)-\vartheta^*\|^2\lesssim  n^{1-2\beta\gamma/(2\beta \gamma+\beta+r \gamma)}$ and $\sum_{i=1}^{n} \sum_{j=1}^\infty \mathrm{var}_{\vartheta^*} (\hat \vartheta_{ij})\lesssim  n^{1-2\beta\gamma/(2\beta \gamma+\beta+r \gamma)}$ uniformly on $\mathcal{H}^{\beta,\gamma}(Q)$. The latter follows from (iii) by observing that $\mathrm{var}_{\vartheta^*} (\hat \vartheta_{ij})=(1+d_{ij}^{-2})^{-2}\le (1+d_{ij}^{-2})^{-1}$. The squared bias term is given by 
	\begin{align}
	\label{neweqn3}
	\sum_{i=1}^{n} \sum_{j=1}^\infty (d_{ij}^{2}+1)^{-2} (\vartheta_{ij}^*)^2\le   \sum_{ (i,j)\in \mathcal{K}} c^{-4\beta/r} i^{4\beta/r} j^{4\gamma+2}(\vartheta_{ij}^*)^2 +  \sum_{(i,j)\in \mathcal{K}^c}(\vartheta_{ij}^*)^2.
	\end{align}
	The first term is bounded by 
	$c^{-4\beta/r} \max\{ i^{2\beta/r} j^{2\gamma+2}: (i,j)\in \mathcal{K}\} \sum_{i=1}^{n} \sum_{j=1}^\infty i^{2\beta/r} j^{2\gamma} (\vartheta_{ij}^*)^2$, 
	which is further bounded by $$ Q^2 n c^{-4\beta/r} c^{2\beta/r} c^{2\beta/(r(2\gamma+1))}=Q^2 nc^{-(2\beta/r)(2\gamma/(2\gamma+1)) }=Q^2 n^{1-[2\beta\gamma/(2\beta\gamma+\beta+r\gamma)]2(\beta+r\gamma)/(r(2\gamma+1))}$$ 
	since $\gamma> (\beta/r)- 1/2>0$. Thus it follows that this term is bounded by a multiple of 
	$n^{1-2\beta\gamma/(2\beta\gamma+\beta+r\gamma)}$, as asserted. 
	
	The second term is bounded by $\max \{i^{-2\beta/r}j^{-2\gamma}: (i,j)\in \mathcal{K}^c \}  \sum_{i=1}^{n}\sum_{j=1}^\infty i^{2\beta}j^{2\gamma} (\vartheta_{ij}^*)^2$ 
	which is dominated by the expression $Q^2 n (\max \{i^{-2\beta/r}j^{-(2\gamma+1)}: (i,j)\in \mathcal{K}^c \})^{2\gamma/(2\gamma+1)}\lesssim n (c^{-2\beta/r})^{2\gamma/(2\gamma+1)}.$   
	The last expression is  
	$n^{1-4\beta\gamma (\beta+r\gamma)/(r(2\beta\gamma+\beta+r\gamma))}\le n^{1-2\beta\gamma /(2\beta\gamma+\beta+r\gamma)}$, as asserted; here we have used the characterization of $\mathcal{K}^c$ as the collection of pairs $(i,j)$ such that $i^{-2\beta/r} j^{-(2\gamma+1)}\le c^{-2\beta/r}$ and that $2(\beta+r\gamma)\ge r (2\gamma+1)$ by the assumption that $\gamma> (\beta/r)- 1/2>0$.	
\end{proof}	

\begin{proof}[Proof of Theorem~\ref{discrete}]
	To simplify the expressions, we assume that the known value of the error standard deviation $\sigma$ is 1. We can write the data in the matrix form $\mathrm{Y}=(\!(Y_i(k/T) : i=1,\ldots,n; k=1,\ldots,T)\!)$. 
	Observe that $\mathrm{W}:=(\!(\psi_{j}(k/T): j,k=1,\ldots, T)\!)$ is an orthogonal matrix.
	We can re-write the model in terms of $\mathrm{Z}:= \mathrm{B} \mathrm{Y} \mathrm{W}$, where $\mathrm{B}$ is the orthogonal matrix formed by the normalized eigenvectors of $\mathrm{L}$.  Then the $(i,j)$th entry of $\mathrm{Z}$ satisfies $Z_{ij} \sim \N (\vartheta_{ij},1)$ independently. Thus the posterior distribution is given by  $ \vartheta_{ij}|Z_{ij}\sim  \N( Z_{ij}/(1+d_{ij}^{-2}), 1/(1+d_{ij}^{-2}))$, where $d_{ij}=c^{2\beta/r}i^{-2\beta/r} j^{-(2\gamma+1)}$. By a standard bias-variance decomposition as in the proof of Theorem~\ref{coverage}, it follows that 
	\begin{align*} 
	\E_{\vartheta^*} \E (d_n^2(\vartheta,\vartheta^*)|\mathrm{Z})=\frac{1}{nT}\big \{  \sum_{i=1}^n \sum_{j=1}^T (1+d_{ij}^{-2})^{-2}+ \sum_{i=1}^n \sum_{j=1}^T \frac{d_{ij}^{-4} (\vartheta_{ij}^*)^2}{(1+d_{ij}^{-2})^2}+\sum_{i=1}^n \sum_{j=1}^T  (1+d_{ij}^{-2})^{-1}\}.
	\end{align*}
	The first term is bounded by the third term. 
	We follow the arguments used in the proof of Theorem~\ref{coverage} to bound each sum above by breaking the index set in $\mathcal{K}=\{(i,j):d_{ij}^{-2}\leq 1\}$ and its complement.
	By arguments there, $\#\mathcal{K}\lesssim c$, $c\log c$ or $c^{2\beta/(\gamma(2\gamma+1))}$ respectively for $2\beta$ less than, equal to or greater than $r(2\gamma+1)$. Then the third  term is estimated as a constant multiple of $(nT)^{-1}  (c+c^{{2\beta}/({r(2\gamma+1)})})$ if $2\beta\ne r(2\gamma+1)$ and $(nT)^{-1}c\log c$ if $2\beta= r(2\gamma+1)$. 
	Finally the second term is bounded by the sum of  $(nT)^{-1} \sum_{(i,j)\in \mathcal{K}} d_{ij}^{-4} (\vartheta_{ij}^*)^2$ and $(nT)^{-1} \sum_{(i,j)\in \mathcal{K}^c} (\vartheta_{ij}^*)^2$. Putting the value of $d_{ij}$, the former is bounded by a constant multiple of 
	$$ c^{-4\beta/r} \max_{(i,j)\in \mathcal{K}} { i^{2\beta/r} j^{2\gamma+2} } \le c^{-4\beta/r} c^{2\beta/r}  \max_{(i,j)\in \mathcal{K}}  j \le c^{-2\beta/r}  (c^{2\beta/r})^{1/(2\gamma+1)}=c^{-4\beta\gamma/(r(2\gamma+1)) }.$$ 
	The latter is bounded by a constant multiple of $\max \{ i^{-2\beta/r} j^{-2\gamma}: (i,j)\in \mathcal{K}^c\}\le ( c^{2\beta/r})^{-2\gamma/(2\gamma+1)}=c^{-4\beta\gamma/(r(2\gamma+1)) }$. Putting all these together, for $2\beta\ne r(2\gamma+1)$, the square of the rate is given by 
	\begin{align} 
	\label{neweqn4}
	(nT)^{-1}c+(nT)^{-1} c^{{2\beta}/({r(2\gamma+1)})}+c^{-4\beta\gamma/(r(2\gamma+1)) }.
	\end{align}	
	The expression is modified to $(nT)^{-1}c\log c+c^{-2\gamma}$ for $2\beta= r(2\gamma+1)$.
	If $2\beta< r(2\gamma+1)$, the second term in \eqref{neweqn4} is dominated by the first. The latter matches the third term for the choice $c=(nT)^{r(2\gamma+1)/(4\beta\gamma+2r\gamma+r)}$, yielding the stated rate $\epsilon_n=(nT)^{-2\beta\gamma/(4\beta\gamma+2r\gamma+r)}$. When 
	$2\beta> r(2\gamma+1)$, the second term in \eqref{neweqn4} is larger and matches the third
	giving the rate $\epsilon_n=(nT)^{-\gamma/(2\gamma+1)}$ for the choice $c=(nT)^{r/2\beta}$. For $2\beta= r(2\gamma+1)$, the rate $(nT/\log (nT))^{-2\gamma/(2\gamma+1)}$ is obtained upon choosing $c=(nT/\log (nT))^{1/(2\gamma+1)}$. 
	
	The last part of the theorem follows by observing that if all true functions are uniformly Lipschitz continuous, the supremum distance between a function and its reconstruct from its values at the grid-points $k/T$, $k=1,\ldots,T$, through linear interpolation, is uniformly of the order $T^{-1}$. 
\end{proof}

\bibliographystyle{plainnat}
\bibliography{Laplacian_reference}
\end{document}